\definecolor{blue(pigment)}{rgb}{0.2, 0.2, 0.6}
\definecolor{darkgreen}{rgb}{0.0, 0.5, 0.0}
\definecolor{darkred}{HTML}{9F000F}
\title{Coherent differentiation in models of  \newline \newline Linear Logic}
\author{Aymeric Walch,\ \url{aymeric.walch@ens-lyon.fr}}
\date{June, 2022}
\begin{document}

\makeatletter
 \begin{titlepage}
  \centering
      \includegraphics[width=0.4\textwidth]{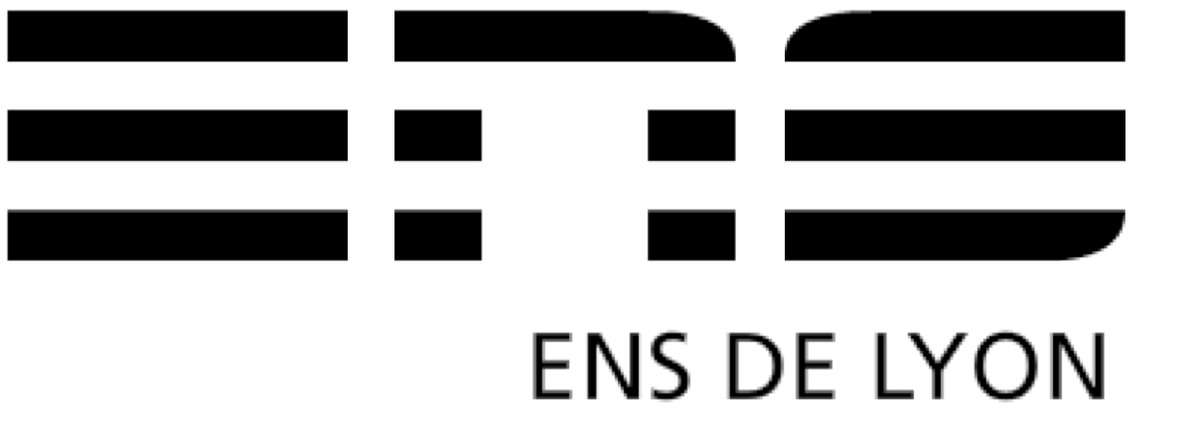}
      \hfill
    \includegraphics[width=0.2\textwidth]{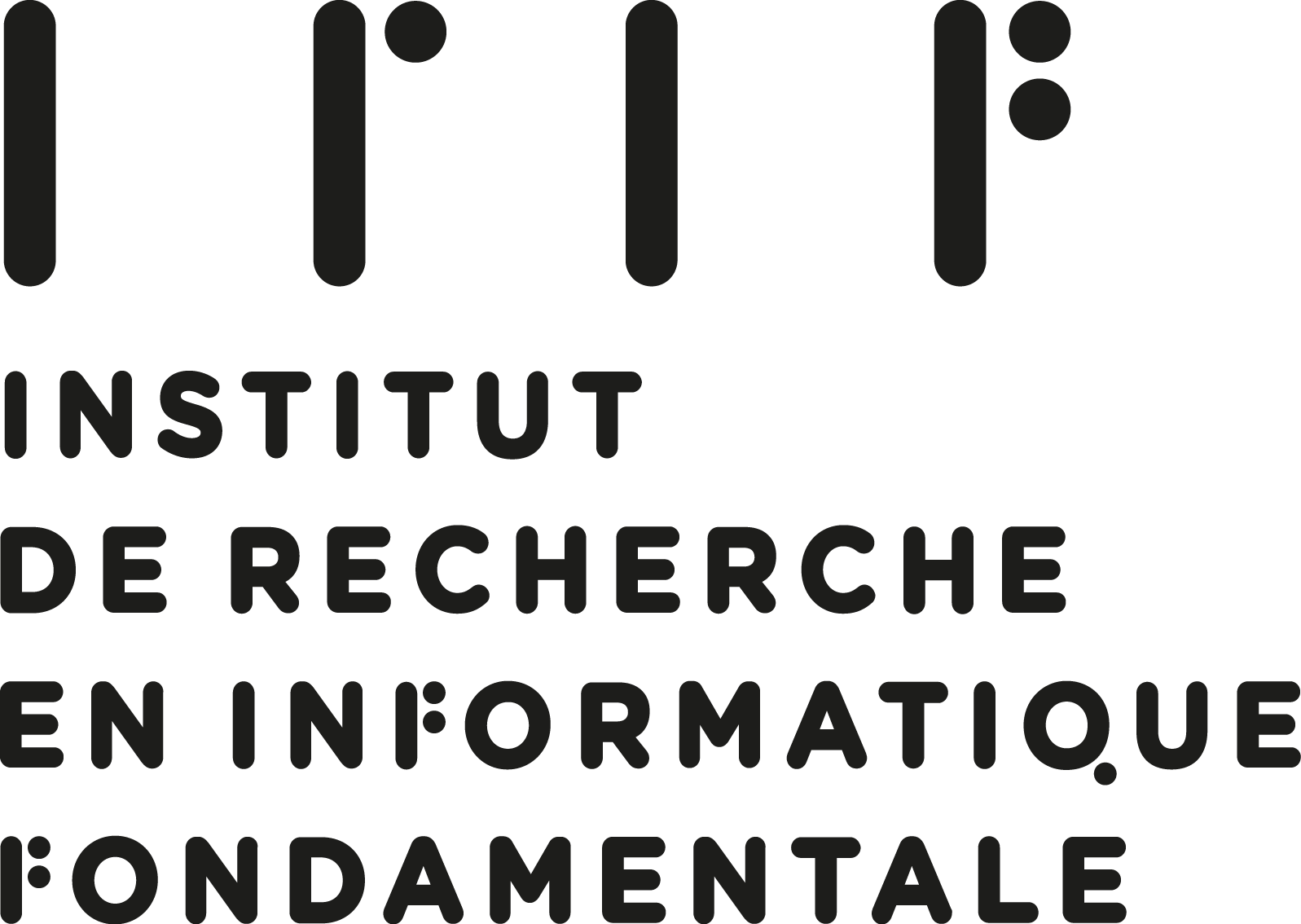} \\
    \vfill
    \vspace*{-2em}
         { \Huge \@title} \\
    \vspace{2em}
      { \large \textbf {\@author}} \\
      { \large \textsc{ENS de Lyon}} \\
      { \large \textsc{M2IF}} \\
    \vspace{1em}
    { \large { \@date }}
    \vfill
      { \large{ Under the supervision of Thomas Ehrhard} }

  \end{titlepage}

Proof theory is a field whose objects of study are the proof themselves. A fundamental concept of proof theory is the Curry Howard isomorphism that relates programs with proof in three levels. The types of a programming language relates to formulas. A program of type $A \arrow B$ relates to a proof of the formula $A \imply B$. Finally, program execution (namely the $\beta$-reduction) relates to a notion of computation on proof called the \emph{cut elimination}. This observation introduced a deeply influential idea: proof theory is not only about the \emph{provability} of formulas, but also about the \emph{computational content} of their proofs. Two proofs of the same formula can relate deeply between each other, or conversely have nothing to do in common.

The issue is that computation is usually very hard to describe and study, as many operational properties can interact between each other in a very subtle way. Thus designing a proof system (or a calculus) usually comes with a lot of requirements, such as proving subject reduction (computation conserves typing), normalization (computation terminates), confluence (different computational strategy converges to the same result), etc. Denotational semantic is a field that arose in order to simplify those issues. The goal of this field is to to describe \emph{what} a program computes while being agnostic on \emph{how} it computes, by giving some object $\mathcal{M}$ called a \emph{model} and an interpretation $\semantic{.}: \{\text{proofs/programs}\} \arrow \mathcal{M} $  invariant under computation. People then noticed that those models should actually have the structure of categories, hence turning the field of denotational semantic into the field of \emph{categorical semantic}. This field found quite a lot of success, for the reason mentioned above but also in more concrete setting: categorical semantic gives a way to describe the ``meaning'' of a program, something that is quite crucial when one wants to prove that the program behaves correctly.

Categorical semantic has another crucial application. Sometimes, some properties that are not captured by the syntax might be shared between many different models. It might suggest  some new syntactical ideas that should be introduced to account for those. This is what happened when Girard \cite{Girard_1987} discovered \emph{linear logic}, a refinement of intuitionistic and classical logic in which proofs and morphisms are interpreted through the lenses of linear algebra. This is also what happened when my supervisor Thomas Ehrhard \cite{Ehrhard_2003} built upon the ideas of linear logic and discovered that many models had a notion of ``linear approximation'' akin to differentiation, hence leading to the development of \emph{differential linear logic}. This is finally what might be happening again with the very recent introduction of \emph{coherent differentiation} also by Ehrhard \cite{Ehrhard21}. He noticed that many models of linear logic were not models of differential linear logic, yet still admitted a notion of differentiation, suggesting that differential linear logic could be improved.

The goal of this internship was to work in the development of this exciting new concept. Concretely, the internship was split in three part of one month and a half each. The first part consisted in a (somewhat long) appropriation of the basics of categorical semantic, of models of linear logic, of differential logic and of coherent differentiation. The second part consisted in showing that coherent differentiation is indeed a generalization of differential categories, a result that was somewhat expected but required some work.  Finally, the third part is too conceptual to explain for now and is hinted in \cref{sec:distributive-law}.

I feel that categorical semantic is a field that is particularly inaccessible and hard to vulgarize. The issue is not that this field is more complex or subtle than any other field. Rather, different notions tends to stack up really quickly. Besides, this field is ultimately based on category theory, a theory that is well known to be quite abstract and quite dividing between its adepts and its detractors. For this reason, there is little hope to explain my work in the scope of a report that is supposed to assume that the reader has no particular background in the field. So this report will mostly consists in an introduction. An introduction to the field of categorical semantic in \cref{sec:cat-semantic}. An introduction to how categorical semantic lead to the discovery of linear logic in \cref{sec:linear-logic}. An introduction to how categorical semantic lead to the discovery of differential linear logic in \cref{sec:diff-logic}. And finally, an introduction to coherent differentiation in \cref{sec:coherent-diff}. If I do my work correctly, then the reader should be able to understand what I worked on and for what reason I worked on those objects. Then I wrap up the report with \cref{sec:dill-is-coherent-diff} that gives a taste of the methodology I developed to show that coherent differentiation is a generalization of differentiation and with \cref{sec:distributive-law} as an appetizer on what I discovered in the last part of the internship.

\section{A (brief) introduction to categorical semantic} \label{sec:cat-semantic}

This introduction to categorical semantic is loosely inspired by this very enlightening paper from Melies \cite{Mellies_2009}. 

\subsection{A small and non exhaustive reminder on proof systems} \label{sec:proof-system}

I will assume that the reader is somewhat familiar with the notion of proof system, and stick to a minimal amount of material.

\begin{definition} \emph{Formulas} are inductively defined as the smallest set containing a set of propositional variables $V$ and which is closed under some set of connectives. For the sake of conciseness, there are only two connectives in this report: the conjunction ($\wedge$) and implication ($\imply$).
\[ F,G := v \in V \ | \ F \wedge G \ |  \ F \imply G \ | \ \top  \]
A \emph{context} is a list of formulas $\Gamma = A_1, \ldots, A_n$. A \emph{sequent} $\Gamma \vdash B$ consists in a context $\Gamma$ together with a formula $B$. It should be interpreted as the formula $A_1 \wedge A_2 \wedge \ldots \wedge A_n \ldots \imply B$. 
\end{definition}


\begin{definition} A \emph{proof system} consists in a set of \emph{inference rules} in the stylized form
\[ \frac{\Gamma_1 \vdash B_1 \ \ \ldots \ \ \Gamma_n \vdash B_n}{\Gamma \vdash B} \text{ \ \  and, for the case n=0  \ \ } \frac{}{\Gamma \vdash B} \]
The $\Gamma_i \vdash B_i$ are called the \emph{premises} of the rule, and $\Gamma \vdash B$ is called the \emph{conclusion}. An \emph{axiom} is a rule with no premises. 
\end{definition}

\begin{definition} A \emph{proof} of a sequent $\Gamma \vdash B$ consists in a labeled tree such that the label of the root is $\Gamma \vdash B$ and such that the label of each node is the conclusion of a rule whose premises are the labels of its children. The labels of the leafs should be the conclusion of an axiom.
\end{definition}

Usually, any sensible proof system contains two kinds of rules. First are structural rules, that allows to manipulate sequents. For example, the rules below are structural rules.
\[ \frac{}{A \vdash A} (\texttt{ax}) \ \ \frac{\Gamma \vdash B}{\Gamma, A \vdash B}(\texttt{weak}) \ \ \frac{\Gamma, A, A \vdash  B}{\Gamma, A \vdash B}(\texttt{contr}) \ \ \frac{\Gamma_1, A_1, A_2, \Gamma_2 \vdash B}{\Gamma_1, A_2, A_1, \Gamma_2 \vdash B}(\texttt{ex})  \]
The rule \texttt{ax} is the mandatory axiom rule. The weakening rules \texttt{weak} discard hypothesis. The contraction rules \texttt{contr} duplicate hypothesis. Finally, the exchange rule \texttt{ex} swap the position of formulas. The usage of this last rule is usually kept implicit.

Second, there are connectors rules that manipulate the different connectors. In this report, I stick to rules where the connector is only in the conclusion. The connector rules are classified depending on the position of the connector with respect to the $\vdash$ symbol: right or left.
\begin{center}
\begin{tabular}{l c c c}
  & Left rules & Right rules \\[1em]
$\wedge$: & $\dfrac{\Gamma, A_1 \vdash B}{\Gamma,  A_1 \wedge A_2 \vdash B} \ \dfrac{\Gamma, A_2 \vdash B}{\Gamma,  A_1 \wedge A_2 \vdash B}$ & $\dfrac{\Gamma \vdash B_1 \ \ \Gamma \vdash B_2}{\Gamma \vdash B_1 \wedge B_2}$ & \text{($\wedge$-rules)} \\[2em]
$\top$ & No rule & $\dfrac{}{\Gamma \vdash \top}$ & \text{($\top$-rules)} \\[2em]
$\imply$:  & $\dfrac{\Gamma_1 \vdash A \ \ \ \ \ B, \Gamma_2 \vdash C}{A \imply B , \Gamma_1, \Gamma_2 \vdash C}$ & $\dfrac{\Gamma, A \vdash B}{\Gamma \vdash A \imply B}$ & \text{($\imply$-rules)}
\end{tabular}
\end{center}
Left rules give a way to \emph{use} an hypothesis that contains a connector. Right rules give a way to \emph{build} a conclusion that contains a connector. There is one last structural rule that is crucial for any proof system.

\begin{definition} The cut rule is the following structural rule: 
\begin{prooftree}
	\hypo{\Gamma_1 \vdash B} 
	\hypo{B, \Gamma_2 \vdash C}
	\infer2{\Gamma_1, \Gamma_2 \vdash C}
\end{prooftree}
\end{definition}

\begin{definition} This proof system is a fragment of a bigger proof system called \emph{LJ} (intuitionistic logic). So I will call this system LJ for the rest of the report.
\end{definition}

\begin{remark} The reader already versed into proof theory and/or linear logic might have noticed that this system has a mix of \emph{additive} and \emph{multiplicative} rules. They have been chosen this way to make the flow of the presentation as minimal and natural as possible.
\end{remark}

The cut rule is the cornerstone of any proof systems as it allows to introduce any arbitrary ``lemma'' $B$. It can also be seen as a kind of composition of a proof of $A \imply B$ with a proof of $B \imply C$. Intuitively, this rule is not necessary: it is always possible to get rid of lemmas by ``unfolding their proof on the fly''. The formal pendant of this idea is called the \emph{cut elimination}.

\begin{theorem}[Cut elimination] The cut rule is admissible in LJ. In other word: for any proof using the cut rule, there exists a proof of the same sequent that does not use the cut rule. Besides, the proof of this theorem gives an explicit procedure called \emph{the cut elimination procedure} that takes as input any proof using the cut rule and outputs a proof of the same sequent  without the cut rule.
\end{theorem}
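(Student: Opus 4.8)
The plan is to prove cut elimination by the standard Gentzen-style argument: give a rewriting procedure on proofs that pushes each cut upwards towards the axioms and eventually eliminates it, then argue this procedure terminates. First I would set up the key measures: the \emph{cut rank} of a cut is the size (number of connectives) of the cut formula $B$, and I would work with a proof containing a single \emph{topmost} cut — that is, a cut whose two subproofs are cut-free — since any proof with cuts contains such a configuration and eliminating topmost cuts repeatedly (innermost-first) reduces to the general case.

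\medskip
\noindent\textbf{Key reduction steps.} For a topmost cut
\begin{prooftree}
	\hypo{\pi_1 : \Gamma_1 \vdash B}
	\hypo{\pi_2 : B, \Gamma_2 \vdash C}
	\infer2{\Gamma_1, \Gamma_2 \vdash C}
\end{prooftree}
I would distinguish cases according to the last rules of $\pi_1$ and $\pi_2$.
\emph{(i) Axiom case:} if $\pi_1$ or $\pi_2$ is the axiom $B \vdash B$, the cut disappears, leaving the other subproof unchanged.
\emph{(ii) Principal/principal case:} if both last rules introduce the cut formula $B$ (e.g.\ $\pi_1$ ends with an $\wedge$-right rule and $\pi_2$ with the corresponding $\wedge$-left rule, or $\pi_1$ ends with $\imply$-right and $\pi_2$ with $\imply$-left), I replace the cut on $B$ by one or two cuts on the immediate subformulas of $B$ — these have strictly smaller cut rank. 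The $\imply$ case produces two nested cuts (on $A$ and on $B'$ where $B = A \imply B'$), both of smaller rank.
\emph{(iii) Commutation cases:} if the last rule of $\pi_1$ (or, symmetrically, of $\pi_2$) does not have $B$ as principal formula, I permute the cut above that rule, so the cut now acts on strictly smaller subproofs with the same cut formula. Weakening on $B$ in $\pi_2$ lets us drop $\pi_1$ entirely; contraction on $B$ in $\pi_2$ duplicates $\pi_1$ and creates two cuts of the same rank on smaller right-subproofs.

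\medskip
\noindent\textbf{Termination.} The main obstacle is the termination argument, since the contraction commutation case \emph{duplicates} a subproof and does not decrease the cut rank. The standard fix is a lexicographic measure: to eliminate a single topmost cut, induct first on the cut rank (principal cases strictly decrease it), and for a fixed cut rank induct on the sum of the sizes of $\pi_1$ and $\pi_2$ — or, to handle the contraction duplication cleanly, prove a \emph{cut reduction lemma} (sometimes called the ``principal reduction'' or Gentzen's Hauptsatz key lemma): if $\pi_1$ and $\pi_2$ are cut-free then $\Gamma_1,\Gamma_2 \vdash C$ has a cut-free proof, by induction on the cut rank with an inner induction on $|\pi_1| + |\pi_2|$, noting that when we duplicate $\pi_1$ by a contraction we recurse on a strictly smaller $\pi_2$ at the same rank, and when we pass to subformulas of $B$ the rank drops. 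Once topmost cuts can be removed, the full theorem follows by an outer induction on the number of cut rules in the proof. I would remark that this procedure is exactly the computational content alluded to above: it is the abstract form of $\beta$-reduction under Curry--Howard, which motivates why the rest of the paper studies semantics invariant under it.
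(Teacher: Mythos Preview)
Your sketch is a correct outline of the standard Gentzen-style cut-elimination argument, and the termination discussion (lexicographic on rank then subproof size, with the contraction case handled by the key reduction lemma) is the right shape. There is nothing to compare against, however: the paper does not prove this theorem. It is stated as a background result and immediately used to motivate the notions of computation on proofs and of denotational semantics; no proof or proof sketch is given in the text. So your proposal goes well beyond what the paper itself supplies, but it is in line with the classical argument the paper is implicitly invoking.
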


This cut elimination procedure is of tremendous importance as it introduces a notion of computation in proof: computing a proof consists in applying the cut elimination procedure, and the result of this computation will be a cut-free proof (a proof with no cut rule). In fact, cut elimination is the proof theoric equivalent of the normalization in the lambda calculus throught the Curry Howard isomorphism. Besides, the the cut free proofs are the proof theoric equivalent of the $\beta$-normal terms.

\subsection{Why categories are natural viewpoints on semantic}

The cut elimination procedure is all about \emph{how} proofs compute. Denotational semantic is instead more interested in \emph{what} proofs compute. The goal of denotational semantic is to associate to any proof  $\pi$ some interpretation  $\semantic{\pi}$ that is agnostic to the computational content of $\pi$.


\begin{definition} \label{def:invariant-modular} The interpretation of a proof should be invariant under cut elimination: if a proof $\pi'$ is obtained from $\pi$ by cut elimination, then $\semantic{\pi} = \semantic{\pi'}$. Besides, the interpretation should be \emph{modular}: If $\pi$ is a proof of shape
\begin{center}
\begin{prooftree} 
	\hypo{\pi_1} \infer1{\Gamma_1 \vdash B_1}
	\hypo{\ldots}
	\hypo{\pi_n} \infer1{\Gamma_n \vdash B_n}
	\infer3{\Gamma \vdash B}
\end{prooftree}
\end{center}
Then  $\semantic{\pi}$ should be uniquely determined by the interpretations $\semantic{\pi_1}, \ldots, \semantic{\pi_n}$ (so even if the proofs $\pi_i$ change, if their interpretation stays the same, the interpretation of $\pi$ remains the same).
\end{definition}

From those reasonable assumptions, we can infer quite a lot about the structure in which the interpretations $\semantic{\pi}$ must live. Take those two proof tree.
\begin{center}
($\pi$):  \begin{prooftree}
	\hypo{} \infer1{A \vdash A}
\end{prooftree} \ \ \ \ \
($\pi'$): \begin{prooftree}
	\hypo{\pi_1} \infer1{A \vdash B}
	\hypo{\pi_2} \infer1{B \vdash C}
	\infer2{A \vdash C}
\end{prooftree}
\end{center}
The proof $\pi$ suggest that for all formula $A$ there exists an entity $id_A := \semantic{\pi}$. The proof $\pi'$ and the constraint of modularity suggests that there exists
an operation $\circ$ on the entities (tied to some compatibility conditions) defined as $\semantic{\pi_2} \circ \semantic{\pi_1} := \semantic{\pi'}$. Besides, the structure of the cut elimination procedure together with the invariance condition implies some algebraic properties on $id_A$ and $\circ$ that are the ones of a \emph{category} (the full detailed process can be found in Mellies's paper~\cite{Mellies_2009}). So the semantic of a proof system \emph{has to be} a category. Denotational semantic can thus be renamed as \emph{categorical} semantic.

\begin{definition}[Models] \label{def:interpretation} A model of a proof theory consists of a category $\category$ together with
\begin{itemize}
 \item An interpretation of formulas: there exists a function $\semantic{.} : \{\text{formulas}\} \arrow Obj(\category)$. This interpretation extends to contexts, by defining $\semantic{A_1, \ldots, A_n} := \semantic{A_1 \wedge \ldots \wedge A_n}$. 
 \item An interpretation of proof: for any sequent $\Gamma \vdash B$,  there exists a function $\semantic{.}_{\Gamma,B} : \{\text{proofs of $\Gamma \vdash B$}\} \arrow \category(\semantic{\Gamma}, \semantic{B})$. Usually, $\Gamma$ and $B$ are kept implicit and I will write $\semantic{\pi}$ for $\semantic{\pi}_{\Gamma,B}$. This interpretation should be invariant and modular in the sense of \cref{def:invariant-modular}.
 \end{itemize}
\end{definition}

\subsection{Exhibiting the structure that $\category$ must have}\label{sec:models-of-logic}

Assume that $\category$ is a model of LJ. Let us infer the structure that $\category$ must have. First, 
the interpretation  of the proof \begin{prooftree}
	\hypo{} \infer1{A \vdash \top}
\end{prooftree} gives a morphism $\prodFinal_{\semantic{A}} : \semantic{A} \arrow \semantic{\top}$. Besides, a syntactical property of the cut elimination procedure called $\eta$-expansion together with invariance and modularity imply that any proof of $A \vdash \top$ must have for interpretation $\prodFinal_{\semantic{A}}$ (I will not go into further details). So $\category$ must have what is called a \emph{final object}.

\begin{definition} A final object in $\category$ is an object $\top_{\category}$ such that for any object $A$ of $\category$, there exists a unique morphism $\prodFinal_A :  A \arrow \top_{\category}$.
\end{definition}

Furthermore, take the three proofs below.
\begin{center}
$\pi:$ \begin{prooftree}
	\hypo{\pi_1} \infer1{A \vdash B_1}
	\hypo{\pi_2} \infer1{A \vdash B_2}
	\infer2{A \vdash B_1 \wedge B_2}
\end{prooftree} \ \ \
$\pi':$ \begin{prooftree}
	\hypo{} \infer1{B_1 \vdash B_1}
	\infer1{B_1 \wedge B_2 \vdash B_1}
\end{prooftree} \ \ \ 
$\pi'':$ \begin{prooftree}
	\hypo{} \infer1{B_2 \vdash B_2}
	\infer1{B_1 \wedge B_2 \vdash B_1}
\end{prooftree}
\end{center}
By modularity, the interpretation of $\pi$ is completely characterized by $\semantic{\pi _1}$ and $\semantic{\pi_2}$. So we can define an operator $\prodPair{.}{.}$ called the \emph{pairing} as $\prodPair{\semantic{\pi_1}}{\semantic{\pi_2}} := \semantic{\pi} : \semantic{A} \arrow \semantic{B_1 \wedge B_2}$. Besides, the interpretations of the proofs $\pi'$ and $\pi''$ provide two morphisms $\prodProj_1:=\semantic{\pi'} :  \semantic{B_1 \wedge B_2} \arrow \semantic{B_1}$  and $\prodProj_2 := \semantic{\pi''} :  \semantic{B_1 \wedge B_2} \arrow \semantic{B_2}$.
Again, invariance implies that $\prodProj_i \circ \prodPair{\semantic{\pi_1}}{\semantic{\pi_2}} = \semantic{\pi_i}$ and $\prodProj_2 \circ \prodPair{\semantic{\pi_1}}{\semantic{\pi_2}} = \semantic{\pi_2}$ and the $\eta$-expansion implies that such pairing should be unique.
As a result, a model of LJ must be what is called a Cartesian category.

\begin{definition} A category $\category$ is said to be Cartesian if it has a final object, and if for any pairs of objects $Y_1$ and $Y_2$, there exists an object $Y_1 \times Y_2$ and two morphisms $\prodProj_1 : Y_1 \times Y_2 \arrow Y_1$ and $\prodProj_2 : Y_1 \times Y_2 \arrow Y_2$ such that for any object $X$ and morphisms $f_1: X \arrow Y_1$, $f_2 : X \arrow Y_2$, there exists a unique morphism $\prodPair{f_1}{f_2} : X \arrow Y_1 \times Y_2$ such that the following diagram commutes.
\begin{center}
\begin{tikzcd}
    & X \arrow[ld, "f_1"'] \arrow[rd, "f_2"] \arrow[dd, "\prodPair{f_1}{f_2}" description, dashed] &     \\
Y_1 &                                                                                              & B_2 \\
    & Y_1 \times Y_2 \arrow[lu, "\prodProj_1"] \arrow[ru, "\prodProj_2"']                           &    
\end{tikzcd}
\end{center}
\end{definition}

The strength of categorical semantic is that it interprets complex syntactical interactions by this kind of simple property  that consists in the existence of an object and of a unique morphism following some commutations. Those properties are called \emph{Universal Mapping Property} (UMP). Any UMP generates a functor, such as the one below in the case of the Cartesian product.

\begin{definition} \label{def:closeness}
This property induces a \emph{bifunctor} (see the \href{sec:annex}{annex} for the terminology) $\_ \times \_ : \category \times \category \arrow \category $ that maps two objects $(X_1, X_2)$ to $X_1 \times X_2$ and two morphism $f_1 : X_1 \arrow Y_1$ and $f_2 : X_2 \arrow Y_2$ to a morphism $f_1 \times f_2 := \prodPair{f_1 \circ \prodProj_1}{f_2 \circ \prodProj_2} : X_1 \times X_2 \arrow Y_1 \times Y_2$ that consists in ``applying $f_1$ to the first coordinate and applying $f_2$ to the second''. 
\end{definition}

\begin{remark} \label{rem:associativity} This functor admits a natural isomorphism $\tensorAssoc^{\times}_{X,Y,Z} := \prodPair{\prodProj_1 \circ \prodProj_1}{\prodPair{\prodProj_2 \circ \prodProj_1}{\prodProj_2}}: (X \times Y) \times Z \arrow X \times (Y \times Z)$. It allows us to  write  $X_1 \times \ldots \times X_n$ and $f_1 \times \ldots \times f_n$  ``up to any arbitrary choice of parenthesis'' by keeping implicit the use of $\alpha^{\times}$. $\category$ is said to admit all finite products.
\end{remark}

The arrow has a similar interpretation. Because of length consideration, I will not do the whole process again, but directly introduce the universal mapping property required.

\begin{definition} Given a bifunctor $\_ \times \_$, a category $\category$ is said to be closed (with regard to $\times$) if for all pairs of object $X$ and $Y$, there exists an object $X \imply_{\category} Y$ and a morphism $\ev : (X \imply_{\category} Y) \times X \arrow Y$ such that for any object $Z$ and morphism $f: X \times Y \arrow Z$, there exists a unique morphism $\cur(f) : (X \imply_{\category} Z)$ such that the diagram below commutes.
\begin{center}
\begin{tikzcd}
X \times Y \arrow[r, "f"] \arrow[d, "\cur(f) \ \times \ id_Y"'] & Z\\
(Y \imply_{\category} Z) \times Y \arrow[ru, "\ev"']       &  
\end{tikzcd}
\end{center}
\end{definition}
For the reader not used to category theory, this definition might cause some issues. But any computer scientist used to functional programming an curryfication is in fact already used to this idea. Recall that curryfication means that any function with two arguments $f : A \times B \arrow C$ can be uniquely mapped to a function  $\cur(f) : A \arrow (B \arrow C)$ defined as $a \mapsto (b \mapsto f(a,b))$. Closedness is similar and means that $\category$ contains an object $B \imply_{\category} C$ that describes the ``set of morphism from $B$ to $C$'' and that any morphism $f : A \times B \arrow C$ can be uniquely seen as a morphism $\cur(f) : A \arrow (B \imply_{\category} C)$. In fact, Curryfication is a particular case of closeness where the morphism $\ev$ is the function that takes as input a function $h : A \arrow B$ and $x \in A$ and returns $h(x)$.

\begin{definition} A Cartesian closed category (CCC) is a category with a Cartesian product that is closed with regard to this product.
\end{definition}

We saw that the models of  LJ have to be Cartesian closed categories. Conversely, any Cartesian closed category $\category$ together with a function $\semantic{.} : X \arrow Obj(\category)$  defines a model. Indeed, the interpretation of formulas can be defined inductively as $\semantic{x} := \semantic{x}$, $\semantic{A \wedge B} := \semantic{A} \times \semantic{B}$, $\semantic{\top} := \top_{\category}$ and $\semantic{A \imply B} := \semantic{A} \imply_{\category} \semantic{B}$.  This interpretation extends to context by defining $\semantic{A_1, \ldots, A_n} = \semantic{A_1 \wedge \ldots \wedge A_n} = \semantic{A_1} \times \ldots \times \semantic{A_n}$ (using \cref{rem:associativity} that ensures parenthesis irrelevance). 

The interpretation of a proof is then defined inductively.. I will not give the full details, just that each of the rule is interpreted by its semantic equivalent: the rule (\texttt{ax}) is interpreted by an identity, the rule (\texttt{cut}) by a composition, the rules ($\wedge$-left) by the projections, the rule ($\wedge$-right) by a pairing, the rule ($\imply$-left) by $\ev$ and the ($\imply$-right) rule by a $\cur$.  It only leaves the rules (\texttt{contr}), (\texttt{weak}) and (\texttt{ex}). The rule (\texttt{weak}) can directly be interpreted by the first projection $\prodProj_1$. The rule (\texttt{contr}) is tied to the existence of a natural transformation $\comonoidSum_X^{\times} : X \arrow X \times X$ that can be defined as $\comonoidSum^{\times} := \prodPair{id}{id}$. . Finally, the exchange rule is tied to the existence of a natural transformation $\tensorSym^{\times}_{X,Y}: X \times Y \arrow Y \times X$ than we can be defined as $\tensorSym := \prodPair{\prodProj_2}{\prodProj_1}$.

\section{An introduction to linear logic} \label{sec:linear-logic}

Linear logic is a proof theory discovered by Girard \cite{Girard_1987} while he was studying a model $\coh$ (the category of coherent spaces) that he discovered \cite{Girard_1986}\footnote{There is a bit of storytelling here. The real story turns out to be even more interesting !}. Girard noticed that the object of the closure $A \imply B$ could be decomposed in two successive constructions $!A \linarrow B$ where $! : Obj(\coh) \arrow Obj(\coh)$ and $\_ \linarrow  \_ : Obj(\coh)^2 \arrow Obj(\coh)$.  Therefore, Girard introduced a category $\lin$ whose objects are the objects of $\coh$ and whose morphisms are defined as $\lin(A, B) := A \linarrow B$. This category admits a functor $! : \lin \arrow \lin$ and by design, for any objects $A,B$, $\coh(A, B) = \lin(!A, B)$. Besides, $\lin$ has some though inducing properties:
\begin{itemize}
	\item The object $A \times B$ still gives a Cartesian product in $\lin$ and $\top$ is still a final object. So $\lin$ is Cartesian.
	\item However, $\lin$ is not closed with regard to this product.
	\item In fact, $\lin$ is closed with regard to some bifunctor $\tensor$. Besides, there exists an object $1$ such that $(\lin, 1, \tensor)$ is what is called a ``symetric monoidal category'', see \cref{def:sm-category} below.
\end{itemize}

It turns out that there is a strong analogy between these objects and linear algebra. Let $\vect$ be the category whose objects are the $\R$-vector spaces and whose morphisms are the (continuous) linear maps. $\vect$ is a Cartesian category: the final object $\top$ is the trivial vector space \{0\} and the Cartesian product is nothing more than the usual product of two vector spaces $E \times F$. Besides, this category seems to be closed: $\vect(E,F)$ is a vector space itself. However it is not closed with regard to the cartesian product: giving a function $f \in \vect(E, \vect(F,G))$ is the same as giving a \emph{bilinear} application $f : E \times F \arrow G$, not a linear map.

 However it is possible to define a vector space $E \tensor F$ called the ``tensor product of E and F''. I will only give its definition in finite dimension, as it is more intuitive. If $E$ is a vector space of dimension $n$ and basis $(e_i)$ and if $F$ is a vector space of dimension $m$ and basis $(f_j)$, then $E \tensor F$ is a vector space of dimension $n.m$ with basis denoted as $(e_i \tensor f_j)$. The idea of this space is to factorize bilinear applications: any bilinear application $\phi : E \times F \arrow G$ can be characterized as a (unique) linear application $\overline{\phi} : E \tensor F \arrow G$ defined on the basis as $\overline{\phi}(e_i \tensor f_j) := \phi(e_i,f_j)$. So $\vect(E, \vect(F,G))$ coincides with the set of bilinear applications $E \times F \arrow G$ that coincides with $\vect(E \tensor F, G)$. This is exactly a closure with regard to $\tensor$.

\subsection{Symetric monoidal categories} \label{sec:mall}

In the two examples above, the functor $\tensor$ gives to the category the structure of a \emph{symetric monoidal category}.  This notion existed prior to linear logic, see \cite{Lane_1963}. The definition below is a refinement due to Kelly~\cite{Kelly_1964}.

\begin{definition}  \label{def:sm-category} A \emph{monoidal} category $(\category, \tensor, 1)$ consists in a category $\category$, a bifunctor $\tensor$ on $\category$ and an object $1$ of $\category$ called the unit such that there exists three natural isomorphism $\tensorAssoc_{A,B,C} : (A \tensor B) \tensor C \arrow A \tensor (B \tensor C)$,  $\tensorUnitL_A : 1 \tensor A \arrow A$, $\tensorUnitR_A : A \tensor 1 \arrow A$ that folows the commutations below..
\begin{center}
\begin{tikzcd}
(A \tensor 1) \tensor B \arrow[r, "\tensorAssoc"] \arrow[rd, "\tensorUnitL \tensor id_B"'] & A \tensor (1 \tensor B) \arrow[d, "id_A \tensor \tensorUnitR"] \\
                                                                                           & A \tensor B                                                            
\end{tikzcd}
\begin{tikzcd}[column sep = small]
                                                                                    & (A \tensor B) \tensor (C \tensor D) \arrow[ld, "{\tensorAssoc_{A, B, C \tensor D}}"'] &                                                                                                                                        \\
A \tensor (B \tensor (C \tensor D))                                                 &                                                                                       & ((A \tensor B) \tensor C) \tensor D \arrow[lu, "{\tensorAssoc_{A \tensor B, C, D}}"'] \arrow[d, "{\tensorAssoc_{A,B,C} \tensor id_D}"] \\
A \tensor ((B \tensor C) \tensor D \arrow[u, "{id_A \tensor \tensorAssoc_{B,C,D}}"] &                                                                                       & (A \tensor (B \tensor C)) \tensor D \arrow[ll, "{\tensorAssoc_{A, B \tensor C, D}}"]                                                  
\end{tikzcd}
\end{center}
A \emph{symetric monoidal} category $\category$ is a monoidal category that is also equiped with a natural transformation $\tensorSym_{A,B} : A \tensor B \arrow B \tensor A$. $\tensorSym$ is required to be an involution: $\tensorSym_{B,A} \circ \tensorSym_{A,B} = id_{A \tensor B}$. Finally, it should follw those commutations below.
\begin{center}
\begin{tikzcd}
1 \tensor A \arrow[r, "{\tensorSym_{1,A}}"] \arrow[rd, "\tensorUnitL"'] & A \tensor 1 \arrow[d, "\tensorUnitR"] \\
                                                                        & A                                    
\end{tikzcd}
\begin{tikzcd}
(A \tensor B) \tensor C \arrow[r, "{\tensorAssoc_{A,B,C}}"] \arrow[d, "{\tensorSym_{A,B} \tensor id_C}"'] & A \tensor (B \tensor C) \arrow[r, "{\tensorSym_{A, B \tensor C}}"]    & (B \tensor C) \tensor A \arrow[d, "{\tensorAssoc_{B,C,A}}"] \\
(B \tensor A) \tensor C \arrow[r, "{\tensorAssoc_{B,A,C}}"']                                              & B \tensor (A \tensor C) \arrow[r, "{id_B \tensor \tensorSym_{A,C}}"'] & B \tensor (C \tensor A)                                    
\end{tikzcd}
\end{center}
\end{definition}

The axioms of monoidal categories are the minimal axioms required to ensure that we can rewrite parenthesis, collapse the unit, and swap formulas in any arbitrary order and still end up with the same morphism (it is a generalization of the regular notion of monoid). Associativity in particular allows to define $A_1 \tensor \ldots \tensor A_n$ and $f_1 \tensor \ldots \tensor f_n$ ``up to parenthesis rewriting'', in a similar way as \cref{rem:associativity}. In fact, a Cartesian category is a particular kind of symetric monoidal category.

\begin{remark} \label{rem:product-is-sm} If $\category$ is Cartesian, the functor $\_ \times \_$  associated to a Cartesian product gives to $\category$ the structure of a symetric monoidal category $(\times, \top, \tensorAssoc^{\times}, \tensorUnitL^{\times}, \tensorUnitR^{\times})$, where $\tensorAssoc^{\times}$ and $\tensorSym^{\times}$ are defined in \cref{sec:models-of-logic}, $\tensorUnitL^{\times}_A = \prodProj_1 : A \times \top \arrow A$, $\tensorUnitR^{\times}_A = \prodProj_2 : \top \times A \arrow A$. Note that $\tensorUnitL^{\times}$ and $\tensorUnitR^{\times}$ are isomorphisms because $\top$ is a final object (their inverses are respectively $\prodPair{id_A}{\prodFinal_A}$ and $\prodPair{\prodFinal_A}{id_1}$). So the notion of symetric monoidal category was already the notion we implicitely used when we said that ``parenthesis for $\times$ are irrelevant and we can interpret the exchange rule''.
\end{remark}



\subsection{Deducing the syntax from the semantic}

These models suggested the existence of a logic where two notions of conjunction might coexist. So Girard introduced a logical system in which formulas had three binary connectives: one implication $\linarrow$ (called the linear arrow), and two conjunction $\tensor$ (called the tensor) and $\with$ (called the with). The semantic of $\tensor$ should be $\tensor$, the semantic of $\linarrow$ should be $\linarrow$, and the semantic of $\with$ should be $\times$. There are also two constant symbols $1$ (associated to $\tensor$) and $\top$ (associated to $\with$) whose semantic should be respectively the objects $1$ and $\top$. Girard introduced the following syntax.

\begin{center}
\begin{tabular}{l c c c}
  & Left rules & Right rules \\[1em]
$\with$: & $\dfrac{\Gamma, A_1 \vdash B}{\Gamma,  A_1 \with A_2 \vdash B} \ \dfrac{\Gamma, A_2 \vdash B}{\Gamma,  A_1 \with A_2 \vdash B}$ & $\dfrac{\Gamma \vdash B_1 \ \ \Gamma \vdash B_2}{\Gamma \vdash B_1 \with B_2}$ & \text{($\with$-rules)} \\[2em]

$\top$ & No rule & $\dfrac{}{\Gamma \vdash \top}$ & \text{($\top$-rules)} \\[2em]

$\linarrow$:  & $\dfrac{\Gamma_1 \vdash A \ \ \Gamma_2, B \vdash C}{\Gamma_1, \Gamma_2, A \linarrow B \vdash C}$ & $\dfrac{\Gamma, A \vdash B}{\Gamma \vdash A \linarrow B}$ & \text{($\linarrow$-rules)} \\[2em]

$\tensor$ & $\dfrac{\Gamma, A_1, A_2 \vdash B}{\Gamma,  A_1 \tensor A_2 \vdash B}$ & $\dfrac{\Gamma_1 \vdash B_1 \ \ \Gamma_2 \vdash B_2}{\Gamma_1, \Gamma_2 \vdash B_1 \tensor B_2}$ & \text{($\tensor$-rules)}  \\[2em]

$1$ & $\dfrac{\Gamma \vdash B}{\Gamma, 1 \vdash B}$ & $\dfrac{}{\vdash 1}$ & \text{($1$-rules)}
\end{tabular}
\end{center}
Together with the structural rules.
\[ \frac{}{A \vdash A} (\texttt{ax}) \ \ \ \ \frac{\Gamma_1, A_1, A_2, \Gamma_2 \vdash B}{\Gamma_1, A_2, A_1, \Gamma_2 \vdash B}(\texttt{ex})  \ \ \ \ \frac{\Gamma_1 \vdash B \ \ \Gamma_2, B \vdash C}{\Gamma_1, \Gamma_2 \vdash C} (\texttt{cut}) \]

Let us motivate those rules with semantic in mind. The rules associated to $\with$, $\top$ and $\linarrow$ are the same as the rule that were associated to $\wedge$, $\top$  and $\imply$, because as we saw the semantic counterpart of those rules is the structure of a Cartesian product, of a final object, and of a closure.  Besides, a sequent $A_1, \ldots, A_n \vdash B$ should be interpreted as the formula $A_1 \tensor \ldots \tensor A_n \linarrow B$ (and not $A_1 \with \ldots \with A_n \linarrow B$)  as the linear arrow is closed with regard to $\tensor$ (and not $\times$). In particular, the ($\tensor$-left) rule is  nothing more than some parenthesis rewriting. On the other hand, the ($\tensor$-right) rule betrays the fact that $\tensor$ is a bifunctor: given two morphisms $f_1 : \semantic{\Gamma_1} \arrow \semantic{B_1}$ and $f_2 : \semantic{\Gamma_2} \arrow \semantic{B_2}$, there exists a morphism $f_1 \tensor f_2 : \semantic{\Gamma_1} \tensor \semantic{\Gamma_2} \arrow \semantic{B_1} \tensor \semantic{B_2}$. Similarly, the right (1-rule) does not say anything semantically (the empty context is interpreted as the formula 1), while the left (1-rule) is the syntactic equivalent of the morphism $\tensorUnitR_{\semantic{\Gamma}} : \semantic{\Gamma} \tensor 1 \arrow \semantic{\Gamma}$. Finally, the exchange rule carries to linear logic thanks to the symmetry $\tensorSym_{\semantic{A}, \semantic{B}} : \semantic{A} \tensor \semantic{B} \arrow \semantic{B} \tensor \semantic{A}$.

\begin{definition} \label{def:mall} The proof system introduced in this section is called iMALL (Intuitionistic Multiplicative Additive Linear Logic). A model of iMALL is a symetric monoidal category $(\category, \tensor, 1, \tensorAssoc, \tensorUnitL, \tensorUnitR, \tensorSym)$ closed with regard to $\tensor$ that also contains a Cartesian product $\times$. From this point onward, I will write the product $\with$ instead of $\times$ in order to match with the logical connector.
\end{definition}

Linear logic should be seen as a logic about \emph{resources management}. A formula $A \linarrow B$ states that ``using the resource  A once and only once produces the outcome B''. The formula $A \tensor B$ is interpreted as ``having both a resource  A and a resource B at the same time''. So $A, A \vdash A \tensor A$ is provable, but not $A \vdash A \tensor A$. The formula $A \with B$ is interpreted as ``choosing at will between a resource A and a resource B, but not being able to chose both''   ''. So $A \vdash A \with A$ is provable, but not $A,A \vdash A \with A$.

\subsection{About the weakening and contraction rules} \label{sec:weakening-and-contraction}

The most notable thing about this new system is the absence of weakening and contraction rules. 
There is an obvious syntactical reason. The presence of those two rules would imply the provability of the sequents $A \tensor B \vdash A \with B$, $A \with B \vdash A \tensor B$, $1 \vdash \top$  and $\top \vdash 1$. It would lead to a collapse between the connector $\tensor$ with the connector $\with$. This is exactly what happens in the system LJ that I introduced in \cref{sec:proof-system}.

 There is a similar observation in the world of semantic. Assume that the weakening and contraction rules hold \emph{for a given formula $A$}. Take the proofs bellow.
\begin{center}
\begin{prooftree}
	\hypo{} \infer1{\vdash 1}
	\infer1{A \vdash 1}
\end{prooftree} \ \ \ \ \ 
\begin{prooftree}
	\hypo{} \infer1{A \vdash A}
	\hypo{} \infer1{A \vdash A}
	\infer2{A,A \vdash A \tensor A}
	\infer1{A \vdash A \tensor A}
\end{prooftree}
\end{center}
The interpretation of the left proof should be a morphism $\comonoidUnit_{\semantic{A}} : \semantic{A} \arrow 1$. The interpretation of the right proof should be a morphism $\comonoidSum_{\semantic{A}} : \semantic{A} \arrow \semantic{A} \tensor \semantic{A}$. So if the rules hold for every formula,  the tensor $\tensor$ would turn into a Cartesian product accordingly to the fact below.

\begin{fact} \label{thm:sm-is-cartesian}
A symetric monoidal category is Cartesian if and only if  there exists two natural transformations $\comonoidSum_X : X \arrow X \tensor X$ and $\comonoidUnit_X : X \arrow 1$ that follow some commutations that I will not detail here (see prop.16 in section 6.3 of \cite{Mellies_2009}  for the complete statement)
\end{fact}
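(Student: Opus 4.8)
The plan is to prove what is essentially \emph{Fox's theorem}: the extra data singled out by the Fact is exactly a natural commutative-comonoid structure on every object which is, in addition, compatible with the monoidal structure. Spelled out, ``the commutations'' are (i) the comonoid laws for each $X$: coassociativity $\tensorAssoc_{X,X,X}\circ(\comonoidSum_X\tensor id_X)\circ\comonoidSum_X = (id_X\tensor\comonoidSum_X)\circ\comonoidSum_X$, the counit laws $\tensorUnitL_X\circ(\comonoidUnit_X\tensor id_X)\circ\comonoidSum_X = id_X = \tensorUnitR_X\circ(id_X\tensor\comonoidUnit_X)\circ\comonoidSum_X$, and cocommutativity $\tensorSym_{X,X}\circ\comonoidSum_X = \comonoidSum_X$; (ii) naturality of both $\comonoidSum$ and $\comonoidUnit$; and (iii) monoidal compatibility: $\comonoidUnit_1 = id_1$, $\comonoidSum_1 = \tensorUnitL_1^{-1}$, and $\comonoidSum_{X\tensor Y}$ equals $\comonoidSum_X\tensor\comonoidSum_Y$ post-composed with the canonical ``middle-four interchange'' isomorphism $(X\tensor X)\tensor(Y\tensor Y)\arrow(X\tensor Y)\tensor(X\tensor Y)$ assembled from $\tensorAssoc$ and $\tensorSym$. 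I would prove the two implications separately.

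For the direction ``Cartesian $\Rightarrow$ comonoids'', here $\tensor$ is $\with$ and $1$ is $\top$, and I would simply set $\comonoidUnit_X := \prodFinal_X$ and $\comonoidSum_X := \prodPair{id_X}{id_X}$. Each required equation is then an instance of the uniqueness clause of a universal mapping property: naturality of $\comonoidUnit$ is uniqueness of maps into the final object, while naturality of $\comonoidSum$, the comonoid laws, cocommutativity and the compatibility identities all follow by composing with $\prodProj_1$ and $\prodProj_2$ and invoking uniqueness of pairings, exactly in the style of \cref{sec:models-of-logic}. This direction is routine.

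For the converse, ``comonoids $\Rightarrow$ Cartesian'', I would first observe that $\top := 1$ is a final object: $\comonoidUnit_X : X\arrow 1$ exists, and for any $h : X\arrow 1$, naturality of $\comonoidUnit$ gives $\comonoidUnit_1\circ h = \comonoidUnit_X$, hence $h = \comonoidUnit_X$ since $\comonoidUnit_1 = id_1$. Then, for objects $X$ and $Y$, I would take
\[
\prodProj_1 := \tensorUnitR_X\circ(id_X\tensor\comonoidUnit_Y) : X\tensor Y\arrow X,
\qquad
\prodProj_2 := \tensorUnitL_Y\circ(\comonoidUnit_X\tensor id_Y) : X\tensor Y\arrow Y,
\]
and, for $f : Z\arrow X$ and $g : Z\arrow Y$, the pairing $\prodPair{f}{g} := (f\tensor g)\circ\comonoidSum_Z : Z\arrow X\tensor Y$. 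That $\prodProj_1\circ\prodPair{f}{g} = f$ is a three-line chase: use functoriality of $\tensor$ and naturality of $\comonoidUnit$ to replace $\comonoidUnit_Y\circ g$ by $\comonoidUnit_Z$, slide $f$ past the unitor by naturality of $\tensorUnitR$, and finish with the counit law $\tensorUnitR_Z\circ(id_Z\tensor\comonoidUnit_Z)\circ\comonoidSum_Z = id_Z$; the computation for $\prodProj_2$ is symmetric.

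The hard part will be uniqueness of the pairing, and I would reduce it to the single identity $(\prodProj_1\tensor\prodProj_2)\circ\comonoidSum_{X\tensor Y} = id_{X\tensor Y}$. Granting that, if $h : Z\arrow X\tensor Y$ satisfies $\prodProj_1\circ h = f$ and $\prodProj_2\circ h = g$, then naturality of $\comonoidSum$ gives $\comonoidSum_{X\tensor Y}\circ h = (h\tensor h)\circ\comonoidSum_Z$, whence
\[
h = (\prodProj_1\tensor\prodProj_2)\circ\comonoidSum_{X\tensor Y}\circ h = \big((\prodProj_1\circ h)\tensor(\prodProj_2\circ h)\big)\circ\comonoidSum_Z = (f\tensor g)\circ\comonoidSum_Z = \prodPair{f}{g}.
\]
Proving the identity $(\prodProj_1\tensor\prodProj_2)\circ\comonoidSum_{X\tensor Y} = id_{X\tensor Y}$ is the genuine computation: one expands $\comonoidSum_{X\tensor Y}$ using the middle-four compatibility law into $\comonoidSum_X\tensor\comonoidSum_Y$ reshuffled by associators and symmetries, substitutes the definitions of $\prodProj_1$ and $\prodProj_2$, and cancels using the two counit laws and naturality of the unitors, with Mac Lane's coherence theorem discharging all the bookkeeping of $\tensorAssoc$, $\tensorUnitL$, $\tensorUnitR$ and $\tensorSym$; this is cleanest in string-diagram notation. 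Once that identity is in hand, the Cartesian universal mapping property holds, and the two constructions are mutually inverse by the uniqueness clauses involved, which closes the equivalence.
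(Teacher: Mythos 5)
Your proposal is correct and follows essentially the same route as the paper's sketch: the same comonoid data $\comonoidUnit_X = \prodFinal_X$, $\comonoidSum_X = \prodPair{id_X}{id_X}$ in one direction, and the same pairing $(f\tensor g)\circ\comonoidSum_Z$ and unitor-plus-counit projections in the other. The only difference is that you supply the uniqueness verification (via the identity $(\prodProj_1\tensor\prodProj_2)\circ\comonoidSum_{X\tensor Y}=id_{X\tensor Y}$ and the monoidal-compatibility law for $\comonoidSum$) that the paper's sketch omits, which is exactly the right way to complete it.
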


\begin{proof}[Sketch of the proof] For the direct implication, take $\comonoidUnit_A := \prodFinal_A$ (recall that $\top$ is a final object) and $\comonoidSum_A := \prodPair{id_A}{id_A}$ (note that I introduced this morphism in \ref{sec:models-of-logic} for interpreting the contraction rule). For the reverse implication, the pairing and the projections are defined as follow.  
\begin{center}
\begin{tikzcd}
\prodPair{f_1}{f_2}: X \arrow[r, "\Delta_X"] & X \tensor X \arrow[r, "f_1 \tensor f_2"] & Y_1 \tensor Y_2
\end{tikzcd}
\end{center}
\begin{center}
\begin{tikzcd}
\prodProj_1: X_1 \tensor X_2 \arrow[r, "id_{X_1} \tensor \prodFinal_{X_2}"] & X_1 \tensor 1 \arrow[r, "\tensorUnitL"] & X_1 
\end{tikzcd} and 
\begin{tikzcd}
\prodProj_2: X_1 \tensor X_2 \arrow[r, "\prodFinal_{X_1} \tensor id_{X_2}"] & 1 \tensor X_2 \arrow[r, "\tensorUnitR"] & X_2
\end{tikzcd}
\end{center}
\end{proof}

However, linear logic is still supposed to be a refinement of classical logic (recall that $\lin(!A, B) = \coh(A,B)$). So Girard introduced a new unary connector $!$ called \emph{the exponential} (that would be the syntactical equivalent of the functor $!$ in $\lin$) and constrained the weakening and contraction to formulas of shape $!A$. So there are four new rules, a right rule for $!$ (called promotion), a  left rule for $!$ (called dereliction), a weakening and a contraction.

\begin{center}
\begin{prooftree}
	\hypo{\Gamma, A \vdash B} \infer1{\Gamma, !A \vdash B}
\end{prooftree} (\texttt{der}) \ \ \ 
\begin{prooftree}
	\hypo{!A_1, \ldots, !A_n \vdash B} \infer1{!A_1, \ldots, !A_n \vdash !B}
\end{prooftree}(\texttt{prom}) \ \ \ 
\begin{prooftree}
	\hypo{\Gamma \vdash B} \infer1{\Gamma, !A \vdash B}
\end{prooftree}(\texttt{weak}) \ \ \ 
\begin{prooftree}
	\hypo{\Gamma, !A, !A \vdash B} \infer1{\Gamma, !A \vdash B}
\end{prooftree} (\texttt{contr})
\end{center}

One should see a formula $!A$ as a resource that can be ``duplicated' and discarded''. The rule (\texttt{weak}) states that a ressourcce $!A$ can be discarded. The rule (\texttt{contr}) states that a ressource $!A$ can be duplicated. The rule (\texttt{der}) just ``forget'' that $A$ was used once and only once. Finally, the rule (\texttt{prom}) states that if all of the resources used to produce $B$ are discardable and duplicable, then $B$ can be produced an arbitrary number of time (including zero) by discarding and duplicating the proof.

Besides, a reasoning similar to the one at the begining of this section shows that the contraction and weakening rules implies the existence of two natural transformations $\weak_X : !X \arrow 1$ and $\contr_X : \ !X \arrow !X \tensor !X$. Conversely, those two morphisms give an interpretation for the proofs that end with a weakening or a contraction rule.

\subsection{The Seely isomorphisms} \label{sec:seely}

There are still many concurrent notions of models of linear logic, and chosing one is usually a matter of personnal taste. I chose the notion of  \emph{Seely categories} introduced in \cite{Seely_1989}. This is not the most general one, but it is quite powerfull and it arises in most of the concrete models. The starting point of this axiomatization is the observation made by Girard and deepened by Seely that in syntax as well as in many models, the connective $!$ ``transports'' the connectives $\with$ to the connectors $\tensor$. 

\begin{definition} \label{def:seely-isomorphisms}  A \emph{Seely category} is a model in which the functor $!$ is a \emph{strong symetric monoidal functor} from $(\category, \with, \top)$ to  $(\category, \tensor, 1)$. It means that there exists an isomorphism $\seelyOne : 1\simeq !\top$ and a natural isomorphism $\seelyTwo_{A,B} : \  !A \tensor !B \simeq !(A \with B)$\footnote{Recall that $\with$ is the new name for $\times$, as stated in \cref{def:mall}} called the \emph{Seely isomorphisms} that follow the commutations below. 
\begin{center}
\begin{tikzcd}
(!A \tensor !B) \tensor !C \arrow[r, "\seelyTwo \tensor id"] \arrow[d, "\tensorAssoc"'] & !(A \with B) \tensor !C \arrow[r, "\seelyTwo"] & !((A \with B) \with C) \arrow[d, "! \tensorAssoc^{\with}"] \\
!A \tensor (!B \tensor !C) \arrow[r, "id \tensor \seelyTwo"']                           & !A \tensor !(B \with C) \arrow[r, "\seelyTwo"']      & !(A \with (B \with C))                                     
\end{tikzcd} \ \ 
\begin{tikzcd}
!A \tensor !B \arrow[d, "\tensorSym"'] \arrow[r, "\seelyTwo"] & !(A \with B)                                     \\
!B \tensor !A \arrow[r, "\seelyTwo"']                         & !(B  \with A) \arrow[u, "!\tensorSym^{\with}"']
\end{tikzcd}
\begin{tikzcd}
!A \tensor 1 \arrow[r, "\tensorUnitR"] \arrow[d, "id \tensor \seelyOne"'] & !A                                                    \\
!A \tensor !\top \arrow[r, "\seelyTwo"']                                  & !(A \with \top) \arrow[u, "!\tensorUnitR^{\with}"']
\end{tikzcd}
\begin{tikzcd}
1 \tensor !A \arrow[r, "\tensorUnitL"] \arrow[d, "\seelyOne \tensor id"'] & !A                                                    \\
! \top \tensor !A \arrow[r, "\seelyTwo"']                                 & !(\top \with A) \arrow[u, "!\tensorUnitL^{\with}"']
\end{tikzcd}
\end{center}
Basically, those diagrams states that doing symetric monoidal reasoning on $!(\_ \with \_)$ using the symetric monoidal structure of $\with$ below $!$ is the same as doing symetric monoidal reasoning on $!\_ \tensor !\_$ using the symetric monoidal structure of $\tensor$.
\end{definition}

\begin{remark} \label{rem:lax-smf} Here, $!$ is a strong monoidal functor from $(\category, \tensor, 1)$ to $(\category, \with, \top)$ but this notion obviously extend to any functor $F$ that goes from a symetric monoidal category $(\category, \tensor, 1)$ to another symmetric monoidal category $(\category', \tensor', 1')$. Besides, when the two natural transformations are not isomorphisms,  $F$ is called a \emph{lax symetric monoidal functor}. We will come accross this notion latter.
\end{remark}

\begin{remark} \label{rem:comonoid-from-seely} Any Seely category contains the two natural trasnformation $\weak_A : !A \arrow 1$ and $\contr_A : !A \arrow !A \tensor !A$. They are obtained by ``lifting'' the two morphisms $\Delta^{\with}_A := \prodPair{id_A}{id_A} : A \arrow A \with A$ and $\epsilon^{\with}_A := \prodFinal_A : A \arrow \top$ that characterize $\with$ as a Cartesian product (recall \cref{thm:sm-is-cartesian}) throught the functor $!$ and carry them to $\tensor$ with the Seely isomorphisms.
\begin{center}
$\contr_A$: \begin{tikzcd}
!A \arrow[r, "! \comonoidSum^{\with}"] & !(A \with A) \arrow[r, "(\seelyTwo)^{-1}"] & !A \tensor !A
\end{tikzcd} \ \ \ \ \ 
$\weak_A$: \begin{tikzcd}
!A  \arrow[r, "! \comonoidUnit^{\with}"] & !\top \arrow[r, "(\seelyOne)^{-1}"] & 1
\end{tikzcd}
\end{center}
So Seely categories does not introduce those two morphisms directly as axioms, but derives them from other objects. This is an usual dynamic of the interraction between syntax and semantic: the axiomatization of the semantic is usually obtained by a direct correspondance with the syntactical rules, but sometimes it is better to divert from it a bit.
\end{remark}

\subsection{The structure of $!$}

Fortunately, the structure that $!$ should have is  more consensual. It should be a comonad, commonly called \emph{the exponential comonad}.

\begin{definition} A \emph{comonad} on $\category$ consists in $(!, \der, \dig)$ where $!\_ : \category \arrow \category$ is a functor together with two natural transformations $\der_A : !A \arrow A$ and $\dig_A : !A \arrow !!A$ such that the following diagrams commute.
\begin{center}
\begin{tikzcd}
!A \arrow[d, "\dig_A"'] \arrow[r, "\dig_A"] & !!A \arrow[d, "\dig_{!A}"] \\
!!A \arrow[r, "!\dig_A"']                   & !!!A                      
\end{tikzcd} \ \ \ \ 
\begin{tikzcd}
   & !A \arrow[ld, equal, "id_{!A}"'] \arrow[d, "\dig_A" description] \arrow[rd, equal, "id_{!A}"]    &    \\
!A & !!A \arrow[l, "\der_{!A}"] \arrow[r, "!\der_A"] & !A
\end{tikzcd}
\end{center}
The left diagram is called the Monadic square and the right diagram is called the monadic triangle.
\end{definition}

\begin{definition}[Dereliction] \label{def:prom} For any morphism $f : A \arrow B$, the morphism $\kleisliCastExp(f):$ 
\begin{tikzcd}
!A \arrow[r, "\der_A"] & A \arrow[r, "f"] & B
\end{tikzcd} is called \emph{the dereliction} of $f$. It gives an interpretation for the dereliction rule.
\end{definition}

\begin{definition}[Promotion] \label{def:der}
For any morphism $f : \ !A \arrow B$ the morphism 
$f^! :$~\begin{tikzcd}
!A \arrow[r, "\dig_{A}"] & !!A \arrow[r, "!f"] & !B
\end{tikzcd} is called the \emph{promotion} of $f$. It gives an interpretation for the promotion rule when $n =1$. There is a generalization of this construction for any $n$ that can be derived from the Seely isomorphisms but I will not give the details here.
\end{definition}

Putting everything together ends up with the following notion.
\begin{definition}[Models of LL]
The models of linear logic are \emph{symetric monoidal} categories $(\category, \tensor, A)$, \emph{closed} with regard to $\tensor$, that contains a \emph{Cartesian product} $(\with, \top)$, that contains a \emph{comonad} $(!, \der, \dig)$ and two natural isomorphisms  $\seelyOne : 1\simeq !\top$ and $\seelyTwo_{A,B} : \  !A \tensor !B \simeq !(A \with B)$ such that $(!, \seelyOne, \seelyTwo)$ is a \emph{srong symetric monoidal functor}. 
\end{definition}

Besides, a property similar to $\lin(!A, B) = \coh(A,B)$ holds in any model of linear logic. Indeed, the category $\coh$ is a particular instance of the generic notion of \emph{Kleisli category}.

\begin{definition} If $(!, \der, \dig)$ is a comonad on $\category$, the Kleisli category of $!$ is a category $\kleisliExp$ with objects $Obj(\kleisliExp) = Obj(\category)$ and morphisms $\kleisliExp(A, B) = \category(!A, B)$. The identity is defined from the dereliction as the morphism $\der_A \in  \category(!A, A) =  \kleisliExp(A, A)$. The composition is defined from the promotion: for any $f \in \kleisliExp(A, B) = \category(!A, B)$ and $g \in \kleisliExp(B, C) = \category(!B, C)$, $g \kleisliExpComp f := g \circ f^!$.
\end{definition}

\begin{proof} The proof that it is a category is left for the reader. The neutrality of the identity is a consequence of the triangle equality of the comonad. The associativity is a consequence of the square of the coMonad.
\end{proof}

If the category $\category$ can be seen as the category of the ``linear world'', the category $\kleisliExp$ on the other hand can be seen as an extension of $\category$ that introduces a way to produce ``non linearity'' (by allowing to duplicate the resources). The \emph{dereliction} introduced in \cref{def:der} allows to ``forget'' that a given morphism is linear.

\begin{proposition} The category  $\kleisliExp$ \emph{extends} $\category$ in the sense that the dereliction $\kleisliCastExp(f)$ induces a functor $\kleisliCastExp : \category \arrow \kleisliExp$ by defining $\kleisliCastExp(A) := A$. Besides, this functor is \emph{faithful}, meaning that if $\kleisliCastExp(f) = \kleisliCastExp(g)$, then $f=g$.
\end{proposition}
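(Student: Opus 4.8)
The plan is to establish the two assertions in turn: first that $\kleisliCastExp$ is a functor, then that it is faithful.

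For functoriality, preservation of identities is immediate, since $\kleisliCastExp(id_A) = id_A \circ \der_A = \der_A$ and $\der_A$ is by definition the identity of $A$ in $\kleisliExp$. For composition, given $f : A \arrow B$ and $g : B \arrow C$ in $\category$, I would unfold $\kleisliCastExp(g) \kleisliExpComp \kleisliCastExp(f) = (g \circ \der_B) \circ (f \circ \der_A)^{!}$; the only step with content is that promoting a dereliction is inert, $(f \circ \der_A)^{!} = {!(f \circ \der_A)} \circ \dig_A = {!f} \circ {!\der_A} \circ \dig_A = {!f}$, where the final equality is the monadic triangle ${!\der_A} \circ \dig_A = id_{!A}$. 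Naturality of $\der$ then gives $\der_B \circ {!f} = f \circ \der_A$, so that $\kleisliCastExp(g) \kleisliExpComp \kleisliCastExp(f) = g \circ \der_B \circ {!f} = g \circ f \circ \der_A = \kleisliCastExp(g \circ f)$. This half is pure bookkeeping with the comonad laws and the naturality squares already at hand.

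For faithfulness, suppose $\kleisliCastExp(f) = \kleisliCastExp(g)$ for parallel $f, g : A \arrow B$, i.e.\ $f \circ \der_A = g \circ \der_A$; the goal is $f = g$. Thus faithfulness of $\kleisliCastExp$ is \emph{exactly} the assertion that every counit component $\der_A$ is an epimorphism — equivalently, $\kleisliCastExp$ is the right adjoint of the Kleisli adjunction of the comonad (its left adjoint being $A \mapsto {!A}$, $\phi \mapsto \phi^{!}$), and a right adjoint is faithful precisely when its counit, here $\der$, is pointwise epic. I would discharge this by exhibiting in the model at hand a natural section $\sigma_A : A \arrow {!A}$ of $\der_A$ (in the concrete models one cares about, the ``singleton'' map serves), which makes each $\der_A$ a split, hence ordinary, epimorphism.

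The one genuine obstacle is precisely this last point: the comonad axioms alone do not make $\der_A$ epic (there are degenerate comonads for which $\kleisliCastExp$ identifies entire hom-sets), so some additional input from the model — a section of $\der$, or another reason it is pointwise epic — really is needed. Once that is granted the conclusion is a single line, everything else being formal.
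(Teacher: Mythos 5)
Your functoriality argument is correct and is the standard one: the identity law is immediate from the fact that $\der_A$ is by definition the identity of $\kleisliExp$, and the composition law reduces, exactly as you say, to $(f \circ \der_A)^{!} = {!f}$ via the triangle identity ${!\der_A} \circ \dig_A = id_{!A}$ followed by naturality of $\der$. The paper states this proposition without any proof at all, so there is nothing to compare against on that half; your computation is what would have been written.

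On faithfulness you have put your finger on a real gap --- in the paper's statement rather than in your argument. Faithfulness of $\kleisliCastExp$ is precisely the assertion that every $\der_A$ is an epimorphism (equivalently, as you note, that the counit of the co-Kleisli adjunction $U \dashv \kleisliCastExp$ is pointwise epic, since a right adjoint is faithful exactly when its counit is), and this does not follow from the comonad axioms, nor from the Seely axioms: a degenerate exponential can collapse hom-sets, and no combination of $\der$, $\dig$, $\weak$, $\contr$ and the Seely isomorphisms manufactures a section $A \arrow {!A}$. In the concrete models the paper has in mind (coherence spaces, relations, finiteness spaces) $\der_A$ is split epic --- the ``singleton''/Dirac map is a section --- so the claim is true there, but as a statement about an arbitrary model of LL it requires the extra hypothesis you identify. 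Your proof is therefore as complete as the statement permits; the honest fix is to add ``$\der$ is pointwise (split) epic'' as a hypothesis, or to phrase faithfulness as a property of the concrete models rather than of the abstract axiomatics.
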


\begin{theorem} The Kleisli category $\kleisliExp$ is cartesian closed, and thus is a model of LJ.
\end{theorem}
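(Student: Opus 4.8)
The plan is to show that the Kleisli category $\kleisliExp$ of the exponential comonad is a Cartesian closed category, which by the discussion in \cref{sec:models-of-logic} makes it a model of LJ. There are two things to establish: that $\kleisliExp$ has finite products, and that it is closed with respect to those products. Throughout I will use the fact that $\category$ itself is a symmetric monoidal closed category with a Cartesian product $\with$, a comonad $(!, \der, \dig)$, and Seely isomorphisms $\seelyOne$ and $\seelyTwo$ turning $!$ into a strong symmetric monoidal functor from $(\category, \with, \top)$ to $(\category, \tensor, 1)$.

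First I would show that $\with$ is still a Cartesian product in $\kleisliExp$ and $\top$ is still a final object. For the final object: a morphism $A \arrow \top$ in $\kleisliExp$ is a morphism $!A \arrow \top$ in $\category$, and since $\top$ is final in $\category$ there is exactly one such morphism, namely $\prodFinal_{!A}$; so $\top$ is final in $\kleisliExp$. For the product: given $A$, $B_1$, $B_2$ and Kleisli morphisms $f_i : !A \arrow B_i$, I need a unique $h : !A \arrow B_1 \with B_2$ with $\der_{B_i}^{\phantom{!}}$-precomposed-appropriately projections recovering $f_i$. The Kleisli projections $\prodProj_i^{\kleisli} : B_1 \with B_2 \arrow B_i$ are the derelictions $\kleisliCastExp(\prodProj_i) = \prodProj_i \circ \der_{B_1 \with B_2}$. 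The key computation is that Kleisli-composition of $\prodProj_i^{\kleisli}$ with the $\category$-pairing $\prodPair{f_1}{f_2} : !A \arrow B_1 \with B_2$ (viewed as a Kleisli morphism) gives back $f_i$; this uses the monadic triangle $\der_{!A} \circ \dig_A = id_{!A}$ together with naturality of $\der$ and the universal property of $\with$ in $\category$. Uniqueness likewise follows from the universal property of $\with$ in $\category$ after unfolding the Kleisli structure. This shows $\kleisliExp$ is Cartesian.

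Next, closedness. Here the crucial identity is the Seely isomorphism: in $\kleisliExp$ we have $\kleisliExp(A \with B, C) = \category(!(A \with B), C) \cong \category(!A \tensor !B, C)$. Using the fact that $\category$ is closed with respect to $\tensor$, the right-hand side is in bijection with $\category(!A, !B \linarrow C)$, which is by definition $\kleisliExp(A, !B \linarrow C)$. So if I define the Kleisli exponential object by $A \imply_{\kleisliExp} B := {!}A \linarrow B$, I obtain a natural bijection $\kleisliExp(A \with B, C) \cong \kleisliExp(A, B \imply_{\kleisliExp} C)$, which is exactly the content of Cartesian closedness (one then extracts the evaluation morphism $\ev^{\kleisli}$ as the image of the identity and checks the universal diagram). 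I would spell out the evaluation map: it is the Kleisli morphism ${!}((A \imply_{\kleisliExp} B) \with A) \arrow B$ obtained by applying $\seelyTwo^{-1}$, then $\ev \tensor (\text{something built from } \der)$ to bring ${!}A$ down to $A$ at the right spot, then the $\category$-level $\ev$ for $!B \linarrow B$. One checks this exhibits the bijection above as currying.

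The main obstacle I expect is bookkeeping: verifying that the bijection $\kleisliExp(A \with B, C) \cong \kleisliExp(A, B \imply_{\kleisliExp} C)$ is genuinely natural in all variables and that the extracted $\ev^{\kleisli}$ and $\cur^{\kleisli}$ satisfy the triangle of the closedness diagram up to the Kleisli composition (which itself involves $\dig$ and $!$ applied to things). The naturality check in the contravariant $A$-slot is where the comonad laws and the Seely coherence diagrams of \cref{def:seely-isomorphisms} really get used — in particular one needs that $\seelyTwo$ is natural and compatible with $\dig$ (equivalently that $!$ with the Seely maps is a strong \emph{monoidal comonad} / the Seely category is what is sometimes called a linear-non-linear adjunction). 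I would either push these coherences through by hand, or — cleaner — phrase the whole argument as: the dereliction functor $\kleisliCastExp : \category \arrow \kleisliExp$ has a left adjoint (the "free $!$" functor), this adjunction is symmetric monoidal from $(\category, \tensor, 1)$ to $(\kleisliExp, \with, \top)$ (this is exactly what the Seely isomorphisms encode), and then closedness of $(\kleisliExp, \with)$ is transported from closedness of $(\category, \tensor)$ along this monoidal adjunction by a standard argument. That reduces the theorem to citing the Seely-category structure already set up, with the only real work being the identification $A \imply_{\kleisliExp} B = {!}A \linarrow B$ and the product computation above.
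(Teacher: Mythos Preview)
Your proposal is correct and follows essentially the same approach as the paper: $\top$ remains final, $\with$ remains the Cartesian product with derelicted projections $\kleisliCastExp(\prodProj_i)$, and closedness is transported from the $\tensor$-closure of $\category$ to the $\with$-closure of $\kleisliExp$ via the Seely isomorphism, yielding $A \imply_{\kleisliExp} B = {!}A \linarrow B$. The paper only sketches this outline and explicitly flags (in the remark immediately following) that the compatibility between $\seelyTwo$ and $\dig$ is the nontrivial coherence needed for the closure argument, which is exactly the point you identified.
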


\begin{proof} I will not detail the proof here, but I will give the important ideas. The final object is still $\top$ and the Cartesian product of $A$ with $B$ is still the object $A \with B$. The projections are defined as $\kleisliCastExp(\prodProj_1) : !(A \with B) \arrow A$ and $\kleisliCastExp(\prodProj_2) : !(A \with B) \arrow B$.  The closure then comes from the fact that we can transport the closure in $\category$ with regard to $\tensor$ to a closure in $\kleisli$ with regard to $\with$ thanks to the Seely isomorphism $\seelyTwo_{X,Y}: ! X \tensor !Y \arrow !(X \with Y)$. 
\end{proof}

\begin{remark} Actually, models of linear logic require one more compatibility condition between $\seelyTwo$ and $\dig$ that is necessary in the proof of the closure, but I will not talk about it here.
\end{remark}

\section{Differentiation in some models of linear logic}  \label{sec:diff-logic}
 
As hinted before with $\vect$, the ideas of linear logic have deep ties with linear algebra. However there is no satisfactory way of defining $(!, \der, \dig)$ on $\vect$, so interpreting the exponential through the lenses of linear algebra seems at first to be a lost cause. Fortunately, Ehrhard discovered a model of LL called $\fin$ (\emph{finiteness spaces}) \cite{Ehrhard_2005} in which the objects are (topological) vector spaces and the arrows are the (continuous) linear maps. In this model, the exponential can be seen as a power series construction: $\fin(!X, Y)$ is the set of \emph{analytic} functions from X to Y. For example, a function $f : \R \arrow \R$ is analytic if $f(x) = \Sigma_{n=0}^{\infty} a_n x^n$. 

Analytic functions are a particular case of smooth functions. Recall the fundamental idea of differential calculus: for any $x \in X$, the variation a smooth function $f : X \arrow Y$ around $x$ can be approximated  as a linear variation $f(x+u) \approx f(x) + f'(x).u$, where $f'(x)$ is a  (continous) linear map called the differential of $f$ in $x$. So there exists a function $f': X \arrow \fin(X,Y)$.
Besides, one can show that when $f$ is analytic $f'$ is also analytic, meaning that $f' \in \fin(!X, \fin(X,Y))$. Because of the closure with regard to $\tensor$, it means that the differential of $f$ can be seen as a morphism $f' \in \fin(!X \tensor X, Y)$. So the fundamental idea of differential calculus seems to corresponds to some categorical properties on $\fin$. Here are those properties, defined in a generic way.
 
 \begin{definition} A model of LL $\category$ is said to be \emph{additive} if for all objects $X, Y$, $\category(X,Y)$ is a monoid. In other word, if there exists a morphism $0 \in \category(X,Y)$ and an associative operator $+$ such that for all $f \in \category(X,Y)$, $0 + f = f + 0 = f$. Besides, the addition should be \emph{compatible} with the composition and the tensor, meaning: \begin{itemize}
	\item $g \circ (f_1 + f_2) \circ h = g \circ f_1 \circ h + g \circ f_2 \circ h$. In other words, the morphisms are ``linear'' in the sum
	\item $(f_1 + f_2) \tensor (g_1 \tensor g_2) = f_1 \tensor g_1 + f_2 \tensor g_1 + f_1 \tensor g_2 + f_2 \tensor g_2$. In other word, the tensor is ``bilinear'' in the sum.
\end{itemize}
\end{definition} 

\begin{definition} \label{def:diff-axioms} A \emph{differential category}  is an additive category $\category$ such that there exists a natural transformation $\dillDerive : !X \tensor X \arrow !X$. This natural transformation allows to define for any $f : !X \arrow Y$ a morphism $f' := f \circ \dillDerive : \  !X \tensor X \arrow Y$. Besides, $\dillDerive$ is required to follow some commutations that consist in separate interactions between $\dillDerive$ and $\weak$/$\contr$/$\der$/$\dig$/$\dillDerive$. I  will not give the diagrams, but explain to what they correspond in finiteness spaces \begin{itemize}
\item Interaction with $\der$: The derivation of a linear function is the linear function itself
\item Interaction with $\dig$: The chain rule holds, that is $(f \circ g)'(x) = f'(g(x)) \circ g'(x)$
\item Interaction with $\weak$: The derivative of a constant function is 0
\item Interaction with $\contr$: The Leibniz rule holds, that is if $\phi$ is a bilinear application, $\phi(f,g)'(x).u = \phi(f'(x).u, g(x)) + \phi(f(x), g'(x).u)$.
\item Interaction with $\dillDerive$: The Schwarz rule, that is $f''(x).(u,v) = f''(x).(v,u)$
\end{itemize}
\end{definition}

Thus the history of linear logic repeated itself. Ehrhard and Regnier \cite{Ehrhard_2003} designed a logic called \emph{differential linear logic} whose models should be differential categories.
We saw in \cref{sec:seely} that the presentation of the semantic might sometimes deviate a bit from the syntax. Something quite similar happens here. Similarly to how the product is lifted through $!$ in \cref{rem:comonoid-from-seely}, the sum (that induces a structure called a \emph{coproduct}) can be lifted through $!$ to produce two natural transformation $\coWeak_X : 1 \arrow !X$ and $\coContr_X : !X \tensor !X \arrow !X$.
\begin{center}
$\coContr_A$: \begin{tikzcd}
!A \tensor !A \arrow[r, "\seelyTwo"] & !(A \with A) \arrow[r, "!(\prodProj_1 + \prodProj_2)"] & !A
\end{tikzcd} \ \ \ \ \ 
$\coWeak_A$: \begin{tikzcd}
1  \arrow[r, "\seelyOne"] & !\top \arrow[r, "!0"] & !X
\end{tikzcd}
\end{center}
Besides, in most differential categories the natural transformation $\dillDerive$ can be derived from a natural transformation $\coDer : X \arrow !X$ as follows.
\begin{center}
$\dillDerive_X:$ \begin{tikzcd}
!X \tensor X \arrow[r, "id_{!X} \tensor \coDer"] & !X \tensor !X \arrow[r, "\coContr"] & !X
\end{tikzcd}
\end{center}
The transformation $\coDer$ can be seen as an operator that computes the derivative in $0$, $f \circ \coDer = f'(0)$. The construction above thus corresponds to the idea that the derivative of $f$ in $x$ can be computed as the derivative in zero of the function $y \mapsto f(x + y)$. Again, there are many flavors of differential categories, see \cite{Blute_2018} for a survey.

Differential logic is a syntactical extension of linear logic that introduces three new rules that account for those morphisms:
a co-weakening (that makes an empty proof), a co-contraction (that sums two proofs) and a co-dereliction (that ``derives'' a proof in 0). 
\begin{center}
(co-weak) \begin{prooftree}
	\hypo{} \infer1{\vdash !A}
\end{prooftree} \ \ \ \ \
(co-contr) \begin{prooftree}
	\hypo{\Gamma_1 \vdash !A} \hypo{\Gamma_2 \vdash !A} \infer2{\Gamma_1, \Gamma_2 \vdash  !A}
\end{prooftree} \ \ \ \ \
(co-der) \begin{prooftree}
	\hypo{\Gamma \vdash A} \infer1{\Gamma \vdash !A}
\end{prooftree}
\end{center}
This syntactical notion of derivation has one primary use called the Taylor expansion. Similarly to how an analytic function (hence a morphism in $\fin(!A,B)$) can be described by its successive derivatives in $0$, the operational behaviour of a proof using the promotion rule can be syntactically described by taking its succesive co-derelictions. It allows to talk about the \emph{ressource sensitivity} of a proof, that is, how a proof can be ``syntactically approximated'' by a proof that uses the formulas less than a given amount of time. This notion nicely relates to \emph{Böhm trees}, a notion in denotational semantic that aims at describing the limit behaviour of a non terminating program. See \cite{Barbarossa_2019} for an enlightening discussion on this topic.

The issue of differential logic though is that the notion of sum is closely tied to non determinism. At first glance, the cut elimination equivalent of the Leibniz rule and the constant rule requires to introduce two new rules.
\begin{center}
(0) \begin{prooftree}
	\hypo{} \infer1{\Gamma \vdash B}
\end{prooftree} \ \ \ \ \
(sum) \begin{prooftree}
	\hypo{\Gamma \vdash B} \hypo{\Gamma \vdash B} \infer2{\Gamma \vdash B}
\end{prooftree}
\end{center}
The first rule introduces a zero proof, the second rule allows to sum two proofs. The first rules makes the logic highly inconsistent as every sequent becomes provable, but it is in fact not that much of an issue because differential logic is more interested in computational properties rather than provability. The second rule however has a lot of repercussions, as it implies that the proof system allows for a non deterministic branching on two possible paths. 
For example, assume that some formula $A$ models the boolean, in the sense that the sequent $\vdash A$ only admits two proofs in linear logic, a proof $true$ and a proof $false$\footnote{In fact $A = 1 \oplus 1$ where $\oplus$ is the disjunction associated to the conjunction $\with$. Since I did not talk about disjunction so I will not give further details}. In differential logic, this sequent admits another proof.
\begin{center}
\begin{prooftree}
	\hypo{true} \infer1{\vdash A} 
	\hypo{false} \infer1{\vdash A}
	\infer2{\vdash A}
\end{prooftree}
\end{center}
whose semantic would be $\semantic{true} + \semantic{false}$, a value that should be neither $\semantic{true}$ neither $\semantic{false}$ but rather a superposition of those two states.

\section{Coherent differentiation} \label{sec:coherent-diff}

In this paper \cite{Ehrhard21}, my supervisor Thomas Ehrhard introduced a new notion called \emph{coherent differentiation}. The cornerstone of coherent differentiation is to generalize differentiation to models that are not always additive categories. The motivation behind this idea is twofold. Firstly, Ehrhard noticed that some models such as \emph{probabilistic coherent spaces}  were not additive categories yet still admitted a notion of differentiation. Secondly, restraining the sum under some compatibility conditions can hopefully make the system deterministic again. For example, there is no morphism such as $\semantic{true} + \semantic{false}$ in probabilistic coherent spaces.

\subsection{Pre-summability structure}

In order to restrict the sum, Erhard introduced a notion called \emph{a pre-summability structure}. Let $\category$ be a model of linear logic such that for any objects $X,Y$, there exists a morphism $0_{X,Y} \in \category(X,Y)$. By abuse of notation, I will usually keep $X$ and $Y$ implicit and write only $0$. Besides, from now on I will start my indices from $0$ in order to be consistent with the notations of the article\footnote{Those notations were subject to debate. My supervisor admitted that he took a bad habit by starting indices by 0. On the other hand, these indices make sense in summability structures because the left term is usually though as a ``zero order term'' while the right term is though as a ``one order term'', as we will see in \cref{sec:coh-diff-operator,sec:monad-S}}.

\begin{definition}
A \emph{pre-summability structure} on $\category$ is a tuple $(\S, \Sproj_0, \Sproj_1, \Ssum)$ such that: \begin{itemize}
\item S is an endofunctor $\S : \category \arrow \category$ that verifies $\S(0_{X,Y}) = 0_{\S X, \S Y}$ for all objects $X,Y$.
\item $\Sproj_0, \Sproj_1, \Ssum : \S \naturalTrans Id$ are natural transformations
\item $\Sproj_0, \Sproj_1$ are jointly monic:  if $f, g \in \homset{Y}{\S X}$, $\disjunctionTwo{\Sproj_0 \circ f = \Sproj_1 \circ g}{\Sproj_1 \circ f = \Sproj_1 \circ g} \imply f = g$ 
\end{itemize}
\end{definition}

 The joint monicity of $\Sproj_0, \Sproj_1$ means that $\S$ is a kind of product: a function $f \in \homset{Y, SX}$ is completely characterized by the two morphisms $\Sproj_0 \circ  f$ and  $\Sproj_1  \circ f$. Note however that contrary to Cartesian product,the pairing of two morphisms $f_0$ and  $f_1$ has no reason to be always defined. When such a pairing exist, we say that $f_0$ and $f_1$ are \emph{summable}
 
\begin{definition}[Sumability] 
$(f_0, f_1)$ are \emph{summable} if there exists a pairing $h$ such that $\begin{cases} \Sproj_0 \circ h = f_0 \\ \Sproj_1 \circ h = f_1 \end{cases}$. As said before when such pairing exists, it is unique. It is written $\Spair{f_0}{f_1}$ and is called the \emph{witness} of the sum. 
The \emph{sum} is then defined as $f_0 + f_1 := \Ssum \Spair{f_0}{f_1}$.
\end{definition}

\begin{remark} \label{rem:pairing-comp-right} Note that by definition of $\Spair{f_0}{f_1}$, $\Sproj_i \circ \Spair{f_0}{f_1} = f_i$ and $\Spair{f_0}{f_1} \circ g = \Spair{f_0 \circ g}{f_1 \circ g}$. Besides, it is easy to check that $\Sproj_0, \Sproj_1$ are summable of witness $id$ and sum $\Ssum$.
\end{remark}

The naturality of $\Sproj_0, \Sproj_1$ and $\Ssum$ express that the sum is compatible with the composition, in a somewhat similar way to additive categories. \begin{itemize}
 	\item The naturality of $\Sproj_0$ and $\Sproj_1$ express that $\S f : \S X \arrow \S Y$ consists in ``applying f coordinate by coordinate'': $\S f \circ \Spair{g_0}{g_1} = \Spair{f \circ g_0}{f \circ g_1}$.
 	\item A consequence of this and \cref{rem:pairing-comp-right} is that if $f_0, f_1$ are sumable, $h \circ f_0 \circ g, h \circ f_1 \circ g$ are summable.  The witness for this sum is $Sh \circ \Spair{f_0}{f_1} \circ g$.
 	\item The naturality of $\Ssum$ then ensures that $h \circ f_0 \circ g+h \circ f_1 \circ g = h \circ (f_0 + f_1) \circ g$. 
\end{itemize}

\begin{result} \label{rem:swith} Let us anticipate a bit over \cref{sec:dill-is-coherent-diff}. This notion of summability generalizes the notion of sum in additive categories. Indeed, let $\category$ be an additive category.
Let $\Swith : \category \arrow \category$ be the functor defined  as $\Swith X := X \with X$ and $\Swith f := f \with f$. This functor can be equiped with the pre-summability structure in which the projections are the projections associated to the cartesian product $\Sproj_i := \prodProj_i : \Swith X \arrow X$ and the sum  is defined as $\Ssum := \prodProj_0 + \prodProj_1: \Swith^2 X \arrow \Swith X$. The projections are jointly monic by unicity of the pairing in a Cartesian product. Finally, $\forall i \in \{0,1\}, \prodProj_i \circ \Swith(0) = 0 = \prodProj_i \circ 0$ so $\S(0) = 0$  by joint monicity of the projections.
 
The sum induced by this presummability structure coincides with the sum of the additive category.  Indeed, any pair of morphisms $f_0, f_1 : X \arrow Y$ are summable (in the sense of the pre-summability structure) with witness $\prodPair{f_0}{f_1}$ and sum $\sigma \circ \prodPair{f_0}{f_1} := (\prodProj_0 + \prodProj_1) \circ \prodPair{f_0}{f_1} = \prodProj_0 \circ \prodPair{f_0}{f_1} +\prodProj_1 \circ \prodPair{f_0}{f_1} = f_0 + f_1$.
\end{result}

\subsection{Summability structure}

The (partial) sum is still quite far from behaving like a sum though, some more axioms are required.  The right axiomatic seems to be the notion of \emph{partial commutative monoid}.

\begin{definition} The sum is called a \emph{partial commutative monoid} if it follows those properties \begin{itemize}
\item Commutativity:  For any morphisms $f, g$ such that $(f, g)$ are summable, $(g, f)$ are summable and $g + f = f + g$
\item Neutrality of 0: For any morphism, $(0, f)$ and $(f, 0)$ are summable and $f + 0 = 0 + f = f$
\item Associativity: For any morphisms $f,g,h$, if $f, g$ and $(f+g),h$ are summable, then $g, h$ are summable, $f, (g+h)$ are summable, and $f+(g+h) = (f+g)+h$
\end{itemize}
A presummability structure whose notion of sum is a partial commutative monoid is called a \emph{summability structure}.
\end{definition}

\begin{result} \label{rem:swith-summability} The presummability structure $\Swith$ defined in \cref{rem:swith} is clearly a summability structure since partial commutative monoids are without a doubt a generalization of commutative monoids.
\end{result}

Interestingly, commutativity implies the existence of a natural transformation $\Ssym = \Spair{\Sproj_1}{\Sproj_0} : \S X \arrow \S X$ and the neutrality of $0$ implies the existence of two natural transformations $\Sinj_0 = \Spair{id}{0} : X \arrow \S X$ and $\Sinj_1 : \Spair{0}{id} : X \arrow \S X$. In fact, Ehrhard shows that those two implications are actually equivalence, but I will not give the proof here. Associativity on the other hand can be reframed as the two following properties.

\begin{definition} We call ($\S$-witness) the following property: for any $X,Y$ and $f, g : X \arrow \S Y$, if $\Ssum \circ f$ and $\Ssum \circ g$ are summable, then $f,g$ are summable.
\end{definition}

\begin{definition} We call  ($\S$-assoc) the following property: assume that $(\Spair{f_0}{f_1}, \Spair{g_0}{g_1})$ are well defined and summable. Then $\forall i \in \{0,1\}, (f_i, g_i)$ are summable, $(f_0 + g_0, f_1 + g_1)$ are summable, and $\Spair{f_0}{f_1} + \Spair{g_0}{g_1} = \Spair{f_0 + g_0}{f_1 + g_1}$. In other words: ``summing two (summable) pairs consists in summing coordinates by coordinates''.
\end{definition}

\begin{proposition} If $0$ is a neutral element, ($\S$-witness) holds and ($\S$-assoc) holds, the sum is associative.
\end{proposition}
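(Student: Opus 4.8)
The plan is to deduce associativity from ($\S$-assoc) by ``padding'' $h$ with a zero so that ($\S$-witness) becomes applicable. So suppose $f, g : X \arrow Y$ are summable and that $f+g$ and $h$ are summable; I want to produce the witnesses $\Spair{g}{h}$ and $\Spair{f}{g+h}$ and to prove $f+(g+h) = (f+g)+h$.

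First I would note that $\Spair{f}{g}$ is available by hypothesis and that $\Spair{0}{h}$ is available because $(0,h)$ is summable by neutrality of $0$. Since $\Ssum \circ \Spair{f}{g} = f+g$ and $\Ssum \circ \Spair{0}{h} = 0+h = h$ are summable by hypothesis, applying ($\S$-witness) to the two morphisms $\Spair{f}{g}, \Spair{0}{h} : X \arrow \S Y$ shows that $\Spair{f}{g}$ and $\Spair{0}{h}$ are themselves summable. At that point all the hypotheses of ($\S$-assoc) hold with $(f_0,f_1) = (f,g)$ and $(g_0,g_1) = (0,h)$, so it gives directly that $(g,h)$ is summable, that $(f+0,g+h) = (f,g+h)$ is summable, and the key identity $\Spair{f}{g} + \Spair{0}{h} = \Spair{f+0}{g+h} = \Spair{f}{g+h}$, where I use $f+0 = f$.

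It then remains to identify $(f+g)+h$ with $f+(g+h)$. Using the compatibility of the sum with composition (the naturality of $\Ssum$ recorded just after the definition of summability) applied with $\Ssum$ on the left of the summable pair $\Spair{f}{g}, \Spair{0}{h}$, I get that $\Ssum\circ\Spair{f}{g}$ and $\Ssum\circ\Spair{0}{h}$ are summable with sum $\Ssum\circ(\Spair{f}{g}+\Spair{0}{h})$; since $\Ssum\circ\Spair{f}{g} = f+g$ and $\Ssum\circ\Spair{0}{h} = h$, uniqueness of the witness (hence of the sum) forces this to be $(f+g)+h$. Substituting the identity from the previous step then yields $(f+g)+h = \Ssum\circ\Spair{f}{g+h} = f+(g+h)$.

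The only point that requires a little care is the bookkeeping with uniqueness of witnesses: I must check that rewriting $h$ as $0+h = \Ssum\circ\Spair{0}{h}$ does not alter the value of $(f+g)+h$, and that ($\S$-witness) is really being fed the pair $(\Spair{f}{g},\Spair{0}{h})$ whose componentwise images under $\Ssum$ are exactly the summable pair $(f+g,h)$. Beyond that, the argument is a direct application of the three hypotheses together with the elementary identities $\Sproj_0\circ\Spair{a}{b} = a$, $\Sproj_1\circ\Spair{a}{b} = b$ and $\Spair{a}{b}\circ u = \Spair{a\circ u}{b\circ u}$ from \cref{rem:pairing-comp-right}.
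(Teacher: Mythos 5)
Your proof is correct and follows essentially the same route as the paper's own argument: pad $h$ into the pair $\Spair{0}{h}$ using neutrality of $0$, invoke ($\S$-witness) on $\Spair{f}{g}$ and $\Spair{0}{h}$, apply ($\S$-assoc) with $(f_0,f_1)=(f,g)$ and $(g_0,g_1)=(0,h)$, and post-compose with $\Ssum$ using compatibility of the sum with composition. The bookkeeping points you flag (uniqueness of witnesses and $0+h=h$) are handled the same way in the paper, so nothing is missing.
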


I spent some amount of time during my internship to try to re frame associativity in a more structural way, in the vein of neutrality of 0 and commutativity. Doing so might gives some insights on the syntactical equivalent of summability structures. However the property ($\S$-witness) is much subtler than it seems and I start to think that it is by nature quite ``un-structural''. In fact, I think that partial commutative monoids might be a notion that is a bit too restrictive. Indeed, any major notion of coherent differentiation can be defined in a weaker setting, where ($\S$-witness) is replaced by the existence of three operators. I will not give more details as this observation is still an early development and has no strong semantical nor syntactical backing yet. One result that I noticed in this process though is that ($\S$-assoc) always hold.

\begin{result} Any pre-summability structure fulfills ($\S$-assoc).
\end{result}

\begin{proof} By compatibility of addition with regard to composition, $(\Spair{f_0}{f_1}, \Spair{g_0}{g_1})$ summable implies that $(\Sproj_i \circ \Spair{f_0}{f_1}, \Sproj_i \circ \Spair{g_0}{g_1})$ are summable (in other word, $(f_i, g_i)$ are summable). Compatibility also ensures that $ \Sproj_i \circ (\Spair{f_0}{f_1} + \Spair{g_0}{g_1}) = (\Sproj_i \circ\Spair{f_0}{f_1}) + ((\Sproj_i \circ\Spair{f_0}{f_1})) = f_i + g_i$. By definition of summability, it implies that $(f_0 + g_0, f_1 + g_1)$ are summable of witness $\Spair{f_0}{f_1} + \Spair{g_0}{g_1}$. It concludes the proof.
\end{proof}

\subsection{The differentiation operator} \label{sec:coh-diff-operator}

Erhard introduced coherent differentiation as a natural transformation $\devMorphism_X : ! \S  X \arrow \S ! X$. It allows to define for any $f : !X \arrow Y$(e.g a smooth function) a morphism.
\begin{tikzcd}
Df : \ !SX \arrow[r, "\devMorphism_X"] & S !X \arrow[r, "S f"] & SY
\end{tikzcd}. Intuitively, $Df$ is interpreted as the function that takes as input two summable elements $(x, u)$ and return the first order development of $f$ on $x$ for the variation $u$: $Df(x, u) = (f(x), f'(x).u)$. More formally, Erhard asks $\devMorphism$ to follow the rule below
\begin{center}
($\devMorphism$-local)
\begin{tikzcd}
!SX \arrow[r, "\devMorphism_X"] \arrow[rd, "!\Sproj_0"'] & S!X \arrow[d, "\Sproj_0"] \\
                                                         & !X                       
\end{tikzcd}
\end{center}
This rule basically states that $\devMorphism_X = \Spair{!\Sproj_0}{\derive_X} $ for some natural transformation $\derive_X : \ !SX \arrow !X$.So it provides two informations:
\begin{itemize}
 \item There is a \emph{derivation} $\derive_X : \ !SX \arrow !X$.  Intuitively, $f \circ \derive : \ !SX \arrow Y $ takes as input a summable pair $(x, u)$ and returns $f'(x).u$.
\item  $\devFunctor f = Sf \circ \devMorphism_X = \Spair{f \circ !\Sproj_0}{f \circ \derive_X}$, which rewrites intuitively as $Df(x, u) = (f(x), f'(x).u)$. So  the existence of $\devMorphism$ not only ensures that the derivative $f'(x).u$ is defined as soon as $(x, u)$ is summable, but it also ensures that $(f(x), f'(x).u)$ is summable, implying that $f(x) + f'(x).u$ is defined.
\end{itemize}

\begin{result} \label{rem:cohdiff-from-diff}  Recall that differential categories are additive categories, and thus admit a summability structure that coincides with the sum, defined in \cref{rem:swith,rem:swith-summability}. Then the operator $\dillDerive : !X \tensor X \arrow X$ induces a derivation for this summability structure.
\begin{center}
\begin{tikzcd}
\derive_X : \ !\Swith X  \arrow[r, "(\seelyTwo)^{-1}"] & !X \tensor !X  \arrow[r, "id \tensor \der_X"] & !X \tensor X \arrow[r, "\dillDerive_X"] & !X
\end{tikzcd}
\end{center}
It means that a coherent differentiation can be defined as the pairing $\devMorphism_X = \prodPair{!\prodProj_0}{\derive}$. In particular, ($\devMorphism$-local) holds by definition.
\end{result}

\subsection{The monad structure on $\S$} \label{sec:monad-S}

Let us start with a practical reasoning that is fundamental in differential calculus. Take two functions $f$ and $g$ with first order development. Those development can be multiplied.
\begin{align}
f(x+u).g(y+v) & = (f(x) + f'(x).u + o(u)).(g(y)+g'(y).v + o(v)) \\
 &= f(x)g(y) + f(x)g'(y).v + g(y)f'(x).u + (f'(x).u)(g'(y).v) + o(u) + o(v) \\
 &\approx f(x)g(y) + v.f(x)g'(y) + u.f'(x)g(y)
\end{align}
The term $(f'(x).u)(g'(y).v)$ is of second order so it can be neglected. This computation thus shows that the first order variation of $f.g$ around $x$ is $f(x)g'(y).v + g(y)f'(x).u$: this is the so called \emph{Leibniz rule}. Interestingly, the proof above only uses a kind of algebraic reasoning, not how the derivative itself is defined. It means that this reasoning can carry to the more abstract framework of summability structure: the development of $f \tensor g: \ !(X_1 \with X_2) \iso \ !X_1 \tensor !X_2 \arrow Y_1 \tensor Y_2$ can be inferred
from the development of $f$ and the development of $g$, using three categorical constructions:

\begin{itemize}
 \item A natural transformations $\strength^0_{X,Y} := \Spair{id \tensor \Sproj_0}{id \tensor \Sproj_1} : X \tensor SY  \arrow S(X \tensor Y)$ that ``distributes on the right pair'' and a natural transformation $\strength^1_{X,Y} := \Spair{\Sproj_0 \tensor id}{\Sproj_1 \tensor id} : SX \tensor Y  \arrow S(X \tensor Y)$ that ``distributes on the left pair''.  Those two operations are well defined assuming an axiom called ($\S \tensor$-dist) that states the compatibility of the sum with regard to the tensor, in a similar way as in additive categories.
 \item A natural transformation $\Smonadsum_X := \Spair{\Sproj_0 \circ \Sproj_0}{\Sproj_0 \circ \Sproj_1 + \Sproj_1 \circ \Sproj_0} : \S^2 X \arrow \S X$ that keeps the constant factor on the left and sum the two order 1 factors on the right.  I will not give the details but we can show that it is well defined using associativity and compatibility of addition with regard to composition.
\end{itemize}

Those constructions allow to define  $\doubleDistrib: $ \begin{tikzcd}
\S X \tensor \S Y \arrow[r, "\strength^1"] & \S(X \tensor \S Y) \arrow[r, "\S \strength^0"] & \S^2 (X \tensor Y) \arrow[r, "\Smonadsum"] & \S (X \tensor Y)
\end{tikzcd} that ``double distributes the two pairs and stash the order two term'' (note that $\strength^0$ could have been applied before $\strength^1$ for the same outcome). Then the reasoning described above is framed abstractly by the following categorical diagram.

\begin{center}
\begin{tikzcd}
!SX_0 \tensor !SX_1 \arrow[d, "Df \tensor Dg"']  & !(SX_0 \with SX_1) \arrow[l, "(\seelyTwo)^{-1}"'] &  & !S(X_0 \with X_1) \arrow[ll, "!\Spair{S\prodProj_0}{S\prodProj_1}"'] \arrow[d, "\devFunctor (f \tensor g)"] \\
SY_0 \tensor SY_1 \arrow[rrr, "\doubleDistrib"'] &                                                   &  & \S (Y_0 \tensor Y_1)                                                                                       
\end{tikzcd}
\end{center}

We can show that $\Sinj_0 : X \arrow \S X$ and $\Smonadsum_X : \S^2 X \arrow \S X$ gives to $\S$ the structure of a \emph{Monad}. Monads are the dual notion of comonads and  are defined as below.
\begin{definition}
A monad $(\monad, \monadUnit, \monadSum)$ on $\category$ is a functor $\monad : \category \arrow \category$ together with two natural transformations $\monadUnit_A : A \arrow \monad A$ and $\monadSum_A : \monad^2 A \arrow \monad A$ such that the following diagrams commute.
\begin{center}
\begin{tikzcd}
\monad^3 A \arrow[d, "\monadSum_{\monad A}"'] \arrow[r, "\monad \monadSum_A"] & \monad^2 A \arrow[d, "\monadSum_{A}"] \\
\monad^2 A \arrow[r, "\monadSum_A"']                   & !A                      
\end{tikzcd} \ \ \ \ 
\begin{tikzcd}
\monad A \arrow[r, "\monadUnit_{\monad A}"] \arrow[rd, equal] & \monad^2 A \arrow[d, "\monadSum_A" description] & \monad A \arrow[l, "\monad \monadUnit_A"'] \arrow[ld, equal] \\
                                                       & \monad A                                        &                                                      
\end{tikzcd}
\end{center}
\end{definition}

\begin{theorem} \label{thm:S-sm-monad} $(\S, \Sinj_0, \Smonadsum)$ is a monad. 
\end{theorem}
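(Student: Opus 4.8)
The plan is to verify the three monad laws for $(\S, \Sinj_0, \Smonadsum)$, where $\Sinj_0 = \Spair{id}{0}$ and $\Smonadsum = \Spair{\Sproj_0 \circ \Sproj_0}{\Sproj_0 \circ \Sproj_1 + \Sproj_1 \circ \Sproj_0}$ (and where naturality of these two transformations has already been recorded above). The only tool I need throughout is the joint monicity of $(\Sproj_0, \Sproj_1)$: to prove that two morphisms with codomain $\S Z$ agree it suffices to prove they agree after postcomposition with $\Sproj_{0,Z}$ and with $\Sproj_{1,Z}$. Together with the defining equations $\Sproj_i \circ \Spair{h_0}{h_1} = h_i$ of a pairing, this turns each law into a short computation in the partial commutative monoid, relying on: the equations $\Sproj_0 \circ \Smonadsum_X = \Sproj_0 \circ \Sproj_0$, $\Sproj_1 \circ \Smonadsum_X = \Sproj_0 \circ \Sproj_1 + \Sproj_1 \circ \Sproj_0$, $\Sproj_0 \circ \Sinj_0 = id$, $\Sproj_1 \circ \Sinj_0 = 0$; naturality of $\Sproj_0$ and $\Sproj_1$; and compatibility of $+$ with composition plus the monoid axioms.

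For the left unit law $\Smonadsum_X \circ \Sinj_{0,\S X} = id_{\S X}$: postcomposing with $\Sproj_{0,X}$ gives $\Sproj_{0,X} \circ \Sproj_{0,\S X} \circ \Sinj_{0,\S X} = \Sproj_{0,X}$, since $\Sproj_{0,\S X} \circ \Sinj_{0,\S X} = id_{\S X}$; postcomposing with $\Sproj_{1,X}$ and distributing the sum on the right gives $\Sproj_{0,X} \circ (\Sproj_{1,\S X} \circ \Sinj_{0,\S X}) + \Sproj_{1,X} \circ (\Sproj_{0,\S X} \circ \Sinj_{0,\S X}) = \Sproj_{0,X}\circ 0 + \Sproj_{1,X} = \Sproj_{1,X}$, using $h \circ 0 = 0$ and neutrality of $0$. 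Both composites coincide with those of $id_{\S X}$, so joint monicity concludes. For the right unit law $\Smonadsum_X \circ \S\Sinj_{0,X} = id_{\S X}$ one argues the same way, after first using naturality of $\Sproj_i$ to slide the functor past the projections: $\Sproj_{i,\S X} \circ \S\Sinj_{0,X} = \Sinj_{0,X} \circ \Sproj_{i,X}$. Then $\Sproj_{0,X} \circ \Smonadsum_X \circ \S\Sinj_{0,X} = \Sproj_{0,X} \circ \Sinj_{0,X} \circ \Sproj_{0,X} = \Sproj_{0,X}$, while the $\Sproj_{1,X}$-component expands (again by $+$-compatibility) to $\Sproj_{0,X}\circ\Sinj_{0,X}\circ\Sproj_{1,X} + \Sproj_{1,X}\circ\Sinj_{0,X}\circ\Sproj_{0,X} = \Sproj_{1,X} + 0 = \Sproj_{1,X}$, using $\Sproj_0\circ\Sinj_0 = id$ and $\Sproj_1\circ\Sinj_0 = 0$.

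The real work is the associativity square $\Smonadsum_X \circ \Smonadsum_{\S X} = \Smonadsum_X \circ \S\Smonadsum_X$, both sides being morphisms $\S^3 X \arrow \S X$. Postcomposing with $\Sproj_{0,X}$: repeated use of $\Sproj_0 \circ \Smonadsum = \Sproj_0 \circ \Sproj_0$ handles the left-hand side, and for the right-hand side one additionally uses naturality of $\Sproj_0$ to rewrite $\Sproj_{0,\S X}\circ\S\Smonadsum_X$ as $\Smonadsum_X\circ\Sproj_{0,\S^2 X}$; both sides collapse to the triple projection $\Sproj_{0,X}\circ\Sproj_{0,\S X}\circ\Sproj_{0,\S^2 X}$. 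Postcomposing with $\Sproj_{1,X}$: expanding $\Sproj_1\circ\Smonadsum = \Sproj_0\circ\Sproj_1 + \Sproj_1\circ\Sproj_0$ first on the outside and then on the inside — on the left-hand side via the $\Smonadsum$-equations, on the right-hand side via naturality of $\Sproj_0, \Sproj_1$ to commute $\S\Smonadsum_X$ past the projections, and in both cases via compatibility of $+$ with pre- and post-composition — one obtains that the left-hand side equals $\bigl(t_{001} + t_{010}\bigr) + t_{100}$ and the right-hand side equals $t_{001} + \bigl(t_{010} + t_{100}\bigr)$, where $t_{001}, t_{010}, t_{100} : \S^3 X \arrow X$ are the three composites of projections that insert $\Sproj_1$ exactly once (in the innermost, middle, outermost slot respectively). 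These sums are all defined because $\Smonadsum$ (hence $\Smonadsum_{\S X}$) is well-defined and $+$ is compatible with composition, so the associativity axiom of the partial commutative monoid yields their equality, and joint monicity closes the square. I expect this last bookkeeping — tracking the object indices of the projections, and especially checking that every intermediate sum is defined so that the monoid axioms apply — to be the only delicate point; the manipulations themselves are mechanical, and notably commutativity of $+$ is never needed.
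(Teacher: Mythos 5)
The paper states this theorem without giving a proof, so there is nothing to compare against; your verification is correct and is the natural one: reduce each law to its two components via joint monicity of $(\Sproj_0,\Sproj_1)$, slide $\S(-)$ past the projections by naturality, and finish with the partial-commutative-monoid axioms. The one delicate point, associativity, you handle properly: the bracketing $(t+t')+t''$ arising from $\Sproj_1\circ\Smonadsum_X\circ\Smonadsum_{\S X}$ is automatically defined because $\Smonadsum$ is well defined and $+$ is compatible with pre- and post-composition, so the associativity axiom of the partial monoid yields both the definedness of the other bracketing and the equality. The only assumption you use tacitly is that $0$ is absorbing for composition ($h\circ 0=0=0\circ h$), but this is already required for $\Sinj_0=\Spair{id}{0}$ to be a natural transformation, which you correctly take as given.
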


\begin{definition} Similarly to comonads, any monad $\monad$ on $\category$ induces a category $\kleisli$ called the \emph{Kleisli category of the monad} where $\kleisli(X, Y) = \category(X, \monad Y)$ (same as for comonads but $\monad$ is on the right). There is also a faithfull functor $\kleisliCast : \category \arrow \kleisli$. I will not give the exact constructions as they are quite similar to the constructions of the comonads.
\end{definition}

Those categories are extensively studied in semantic. The reason is that they model \emph{effects}. For example, the \texttt{Option} type in programming language (an element of \texttt{Option X} is either $\texttt{Some}(x)$ where $x \in X$ or \texttt{None}) is a monad. The Kleisli category of this monad models computations that can either terminate correctly (by outputting $\texttt{Some}(result)$) or raise an exception (by outputting \texttt{None}). There are also monads that models memory, non determinism, randomness, etc... It suggests that differentiation might be an effect too. As a result, most of the last part of the internship was centered around the study of $\kleisliS$, see \cref{sec:distributive-law}.

\subsection{The axioms of coherent differentiation} \label{sec:cohdiff-axioms}

  Those diagrams turned out to corresponds to the elementary properties that derivation should follows, and their interpretation coincides with the interpretion of the diagram required in DiLL.  I will not give the diagram themselves, rather their interpretation.
\begin{itemize}
 \item[($\devMorphism$-lin)]Two axioms that give the interactions between $\devMorphism$ and $\Sinj_0$/$\Smonadsum$. They state that the derivative of a function is linear. The first one ($\devMorphism$-lin-1) ensures  that $f'(x).0 = 0$ and the second one ($\devMorphism$-lin-2) ensures that $f'(x).(u_1 + u_2) = f'(x).u_1 + f'(x).u_2$. 
 \item[($\devMorphism$-chain)]Two axioms that give the interactions between $\devMorphism$ and $\der$/$\dig$ The axiom ($\devMorphism$-chain-1) states that the derivative of a linear function is the function itself. The axiom ($\devMorphism$-chain-2) states the chain rule.
 \item[($\devMorphism$-Leibniz)]Two axioms that give the interaction between $\devMorphism$ and $\weak$/$\contr$. The axiom ($\devMorphism$-Leibniz-1) states that the derivative of a constant function is null. The axiom ($\devMorphism$-Leibniz-2) is the Leibniz rule, and is equivalent to diagram motivated in \cref{sec:monad-S}.
 \item[($\devMorphism$-Schwarz)]One axiom that gives the interaction of $\devMorphism$ with itself.  This axiom states the Schwarz rule.
\end{itemize}

Just as an example, here is the axiom ($\devMorphism$-lin).
\begin{center}
($\devMorphism$-lin)
\begin{tikzcd}
!SX \arrow[r, "\devMorphism_X"] \arrow[rd, "\der_{SX}"'] & S!X \arrow[d, "S\der_X"] \\
                                                        & SX                      
\end{tikzcd}
\begin{tikzcd}[column sep = large]
!\S X \arrow[r, "\devMorphism_X"] \arrow[rd, "\der_{\S X}"'] & S!X \arrow[d, "\S \der_X"] \arrow[rd, "S (f \circ \der)", dashed] &      \\
                                                             & \S X \arrow[r, "\S f"', dashed]                                   & \S Y
\end{tikzcd}
\end{center}
How does this rule says anything about the derivative of a linear function ? Let us plug a morphism of the shape $Sf$ at the end of the diagram. Then by functoriality of $\S$, $\S f \circ \S \der = \S (f \circ \der)$.

Recall that if $f : X \arrow Y$, then $f \circ \der : !X \arrow Y$ can be interpreted as the linear function that maps an element $x \in X$ to $f(x) \in Y$. Thus, thanks to our interpretation of $\devMorphism$, the the top path of the diagram $\S (f \circ \der) \circ \devMorphism_X$ consists in a morphism that maps a pair $(x, u)$ to the first order development $(f(x), f'(x).u)$. On the other hand, the bottom path $\S f \circ \der_{\S X}$ consists in the linear function that maps a pair $(x, u)$ to the pair $(f(x), f(u))$. So the diagram above interprets as: ``if f is a linear function, for any $x, u \in X$ that are summable, $f'(x).u = f.u$''.

\section{Differentiation in differential linear logic is a particular case of coherent differentiation} \label{sec:dill-is-coherent-diff}

The first objective of the internship was to show that the structure of a differential categories induces a structure of coherent differentiation. The goal behind this was threefold. Firstly, it is a nice starting point for an internship, as it allowed me to hone my understanding of both models of differential linear logic and coherent differentiation. Secondly, it ensures that coherent differentiation is a generalization of differential categories, not a divergent notion. Finally, it implies that understanding coherent differentiation might bring some insights to differential categories too.

I already showed that the sum in additive categories induce a summability structure (see \cref{rem:swith,rem:swith-summability}) and that any differential operator $\dillDerive : !X \tensor X \arrow X$ induces a candidate for coherent differentiation (see \cref{rem:cohdiff-from-diff}). What remains is to check the various axioms of coherent differentiation. Unsurprisingly, each axiom on $\devMorphism$ ends up being a consequence of its counterpart on $\dillDerive$. For example, the axiom ($\devMorphism$-chain-1) that states that ``the derivative of a linear function is the function itself'' is a consequence of the interaction between $\dillDerive$ and $\der$ that states the same idea. The only exceptions are the two the axioms ($\devMorphism$-lin) that have a more structural counterpart expressed in the fact that $\derive$ is defined from $\dillDerive$ with a dereliction (so, by forgetting the linearity in the second coordinate).

The proof is conceptually not that hard, but many technicalities arose. Consequently, the full redacted proof is 25 pages long. Obviously, I will not write everything down here. I will rather exhibit the proof structure through the example of ($\devMorphism$-chain-1), as this rule is probably the easiest one that still contains some structural arguments. The first crucial step was to show that this rule was equivalent to a rule about $\derive$.

\begin{result} \label{res:axiom-derive} For any summability structure and morphism $\devMorphism$ that follows ($\devMorphism$-local), the diagram ($\devMorphism$-lin) is equivalent to
\begin{center}
\begin{tikzcd}
!SX \arrow[r, "\derive_X"] \arrow[d, "!\Sproj_1"'] & !X \arrow[d, "\der_X"] \\
!X \arrow[r, "\der_X"']                            & X                   
\end{tikzcd}
\end{center}
\end{result}

\begin{proof} By joint monicity of $\Sproj_0, \Sproj_1$, the diagram ($\devMorphism$-lin) holds if and only if the diagram below on the left holds for any $i \in \{0,1\}$. This diagram admits the diagram chase on the right.
\begin{center}
\begin{tikzcd}
!\S X \arrow[d, "\der_{\S X}"'] \arrow[rr, "\devMorphism_X"] &   & \S ! X \arrow[d, "\S \der_X"] \\
\S X \arrow[r, "\Sproj_i"']                                  & X & \S X \arrow[l, "\Sproj_i"]   
\end{tikzcd}
\begin{tikzcd}
!\S X \arrow[dd, "\der_{\S X}"'] \arrow[rrr, "\devMorphism_X"] \arrow[rd, "!\Sproj_i"'] \arrow[rdd, "(a)" description, phantom] &                         &                        & \S ! X \arrow[dd, "\S \der_X"] \arrow[ld, "\Sproj_i"] \arrow[ldd, "(b)" description, phantom] \\
                                                                                                                               & !X \arrow[d, "\der_X"'] & !X \arrow[d, "\der_X"] &                                                                                              \\
\S X \arrow[r, "\Sproj_i"']                                                                                                    & X \arrow[equal]{r}             & X                      & \S X \arrow[l, "\Sproj_i"]                                                                  
\end{tikzcd}
\end{center}
The commutation (a) is the naturality of $\der$ and the commutation (b) is the naturality of $\Sproj_i$.
This reduced diagram trivially holds for $i=0$ using that $\Sproj_0 \circ \devMorphism = !\Sproj_0$. The induced diagram for $i=1$ on the other hand is exactly the diagram of the result, using that $\Sproj_1 \circ \devMorphism = \derive$.
\end{proof}

This first step is crucial, because this diagram is already much closer to its counterpart on $\dillDerive$. Here is a side by side comparison.
\begin{center}
\begin{tikzcd}
! X \with X \arrow[r, "\derive_X"] \arrow[d, "!\prodProj_1"'] & !X \arrow[d, "\der_X"] \\
!X \arrow[r, "\der_X"']                            & X                   
\end{tikzcd} \ \ \ \ \ 
\begin{tikzcd}
!X \tensor X \arrow[r, "\dillDerive_X"] \arrow[d, "\weak_X \tensor id_X"'] & !X \arrow[d, "\der_X"] \\
1 \tensor X \arrow[r, "\tensorUnitL"']                                     & X                     
\end{tikzcd}
\end{center}
The biggest difference is that the ``structural work'' is done on $\with$ on the left and on $\tensor$ on the right. Fortunately, recall that the Seely isomorphisms relate those two structures.

\begin{result}[Rewriting the projections] \label{res:proj-struct} The following diagrams commute
\begin{center}
\begin{tikzcd}
!(X \with Y) \arrow[d, "(\seelyTwo)^{-1}"'] \arrow[r, "! \prodProj_0"] & !X                                      \\
!X \with !Y \arrow[r, "id_{!X} \tensor \weak_Y"']                  & !X \tensor 1 \arrow[u, "\tensorUnitR"']
\end{tikzcd}
\begin{tikzcd}
!(X \ Y) \arrow[d, "(\seelyTwo)^{-1}"'] \arrow[r, "! \prodProj_1"] & !Y                                      \\
!X \tensor !Y \arrow[r, "\weak_X \tensor id_{!Y}"']                  & 1 \tensor !Y \arrow[u, "\tensorUnitL"']
\end{tikzcd}
\end{center}
\end{result}

\begin{proof} I do the proof for the second diagram. Unfolding the definition of the weakening (recall \cref{rem:comonoid-from-seely}) leads up to the following diagram chase.
\begin{center}
\begin{tikzcd}
!(X \with Y) \arrow[rr, "\prodProj_1"] \arrow[dd, "(\seelyTwo)^{-1}"'] \arrow[rd, "!(\comonoidUnit_X \with id_{!Y})"] \arrow[rdd, "(a)", phantom] & {}                                                                                                   & !Y \arrow[ldd, "(b)", phantom]           \\
                                                                                                                                                   & !(\top \with Y) \arrow[d, "(\seelyTwo)^{-1}"'] \arrow[ru, "!\prodProj_1"] &                                          \\
!X \tensor !Y \arrow[r, "!\comonoidUnit_X \tensor id_{!Y}"']                                                                                       & !\top \tensor !Y \arrow[r, "\seelyOne^{-1}"']                                                        & 1 \tensor !Y \arrow[uu, "\tensorUnitL"']
\end{tikzcd}
\end{center}
The commutation (a) is the naturality of $\seelyTwo$, the commutation (b) is one of the diagrams that ensures that $(!, \seelyOne, \seelyTwo)$ is a symetric monoidal functor. The triangle at the top can be directly computed.
\end{proof}

We now have all of the ingredients necessary to prove that ($\devMorphism$-chain-1) holds.

\begin{result} \label{res:chain-1} The diagram of \cref{res:axiom-derive} commutes, so ($\devMorphism$-chain-1) holds.
\end{result}

\begin{proof}  We perform the following diagram chase.
\begin{center}
\begin{tikzcd}
! (X \with X) \arrow[r, "(\seelyTwo)^{-1}"] \arrow[d, "!\prodProj_1"'] \arrow[rd, "(a)", phantom] & !X \tensor !X \arrow[r, "!X \tensor \der_X"] \arrow[d, "\weak_X \tensor id_{!X}" description] \arrow[rd, "(b)", phantom] & !X \tensor X \arrow[r, "\dillDerive_X"] \arrow[d, "\weak_X \tensor id_X" description] \arrow[rd, "(c)", phantom] & !X \arrow[d, "\der_X"] \\
!X \arrow[r, "\tensorUnitL^{-1}"']                                                                 & 1 \tensor !X \arrow[r, "id_1 \tensor \der_X"']                                                                           & 1 \tensor X \arrow[r, "\tensorUnitL"']                                                                           & X                     
\end{tikzcd}
\end{center}
The commutation (a) comes from \cref{res:proj-struct}, the commutation (b) is the functoriality of $\tensor$, and the commutation (c) is the linear rule in models of differential linear logic.
\end{proof}

To sum up, I proved that the different axioms of coherent differentiation hold by using the following proof method. \begin{itemize}
 \item I proved that all of the axioms of coherent differentiation are equivalent to axioms on the derivation operator $\derive$ that is introduced by $(\devMorphism$-local), as in \cref{res:axiom-derive}
 \item Those axioms are always similar to their counterpart in differential categories, except that the structural work is done on $\with$ rather than on $\tensor$. Fortunately I showed that the properties of the Seely isomorphisms stated in \cref{def:seely-isomorphisms} close the bridge between those two worlds, as in \cref{res:proj-struct}.
 \item I wrapped everything up using basic properties such as naturality and functoriality, as in \cref{res:chain-1}.
\end{itemize}

I cannot resist to give one last structural commutation as this one is particularly insightful on the fact that the Seely isomorphisms lift the structure of the Cartesian product $\with$ under $!$ to give to the tensor $\tensor$ a structure similar to a Cartesian product on the objects of shape $!X$, accordingly to \cref{thm:sm-is-cartesian} and \cref{rem:comonoid-from-seely}.
\begin{center}
\begin{tikzcd}
!X \arrow[d, "!\prodPair{f_0}{f_1}"'] \arrow[r, "\contr_X"] & !X \tensor !X \arrow[d, "!f_0 \tensor !f_1"] \\
!(Y_0 \with Y_1) \arrow[r, "(\seelyTwo)^{-1}"']               & !Y_1 \tensor !Y_1                           
\end{tikzcd}
\end{center}

\section{Distributive laws, distributive laws everywhere} \label{sec:distributive-law}

 In his paper, Ehrhard noticed that all of the concrete instances of summability structure he found in the various models consisted in a functor $\canonicalS = I \linarrow \_$ where $I = 1  \with 1$. The monads of this shape are said to be ``right adjoints'' to a comonad of shape $\canonicalCoS = \_ \tensor I$, a relation written $\canonicalCoS \dashv \canonicalS$ that I will not detail here. Ehrhard then showed that this relation relates coherent differentiation on $\canonicalS$ to what is called a coalgebra on $I$, that is a morphism $\delta : I \arrow !I$ with some properties. It is somewhat folklore in the field that under those conditions, the Kleisli category of $\canonicalCoS$ are still models of linear logic, as  it was already answered by Girard in a particular example. The initial goal of this last part of the internship was to make a definitive proof of this statement.
 
 In parallel, Ehrhard noticed that the rules ($\devMorphism$-lin) and ($\devMorphism$-chain) made $\devMorphism$ what is called a \emph{distributive law} from the monad $\S$ to the comonad $!$, a notion introduced a while ago by Beck \cite{Beck_1969} and quite studied in some modern fields of semantic, see \cite{Power_2002} for example. I then noticed that a similar notion of distributive law existed, this time from a comonad to another comonad. This notion was surprisingly very close to the commutations required on $\delta$. This is when I noticed that Ehrhard implicitely showed in his paper that the adjunction $\canonicalCoS \dashv \canonicalS$ carries distributive laws $\devMorphism$ from the monad $\canonicalCoS$ to the comonad $!$ to distributive laws from the comonad $\canonicalCoS$ to the comonad $!$, and that the existence of such distributive law is equivalent to the existence of a coalgebra $\delta$.
 
The fact that the kleisli category $\kleisliConstructor{\category}{\canonicalCoS}$ is a model of linear logic thus seemed to hint that $\kleisliS$ should be a model of a linear logic as well (and that a proof of the first statement should carry to the second through an adjunction). So I decided to devise some generic conditions under which the Kleisli category of a monad (or a comonad) is a model of linear logic to make everything clearer.

Recall that a model of linear logic $\category$ is characterized by nothing more than the existence of some functors together with the existence of some natural transformations that follow some commutations: a functor $\tensor$ together with four natural transformations $\tensorAssoc, \tensorUnitL, \tensorUnitR, \tensorSym$, a functor $!$ together with two natural transformations $\der, \dig$, and two natural transformations $\seelyOne : \top \arrow !1$ and  $\seelyTwo_{X,Y} : !X \tensor !Y \arrow !(X \with Y)$. Finally, there are two universal mapping properties (the Cartesian product $\with$ and the closure with regard to $\tensor$). I will not talk about those here, but they can also be looked upon this point of view using the fact that giving an universal mapping property is the same as giving two functors and two natural transformations (thanks to the notion of \emph{adjunction}).

Now let us assume that we can extend those functors to $\kleisliS$ in the sense of \cref{def:extension} below and that we can extend those natural transformations in the sense of \cref{def:extension-natural}.

\begin{definition} \label{def:extension} Given a monad $\monad[1]$ on a category $\category[1]$ and a monad $\monad[2]$ on a category $\category[2]$, a functor $\kleisliExtension{F} : \kleisli[1] \arrow \kleisli[2]$ \emph{lifts} the functor $F : \category[1] \arrow \category[2]$ to the Kleisli categories if for any object $X$ of $\category[1]$, $\kleisliExtension{F} X = F X$, and if for any morphism $f \in \category(X, Y)$, $\kleisliExtension{F}(\kleisliCast[1](f)) = \kleisliCast[2](Ff)$. In other word, ``$\kleisliExtension{F}$ coincides with $F$ on $\category[1]$''.
\end{definition}

\begin{definition} \label{def:extension-natural} Take two functors $F, G : \category[1] \arrow \category[2]$ with some extensions $\kleisliExtension{F}, \kleisliExtension{G} : \kleisli[1] \arrow \kleisli[2]$. Given a natural transformation $\alpha : F \naturalTrans G$, we can define a family of morphisms
\[ \kleisliCast[2](\alpha) := (\kleisliCast[2](\alpha_X))_X \in  \kleisli[2](\kleisliExtension{F}X, \kleisliExtension{G}X) \]
We say that $\alpha$ extends to $\kleisliExtension{F}$ and $\kleisliExtension{G}$ if $\kleisliCast[2](\alpha)$ is a natural transformation $\kleisliExtension{F} \naturalTrans \kleisliExtension{G}$.
\end{definition}

Then all the commutations on the natural transformations $\kleisliCastS(.)$ would be obtained for free using the functoriality of $\kleisliCastS$. So $\kleisliS$ would ``inherits'' from $\category$ the necessary structure to be a model of linear logic. The litterature is already on point on the necessary and sufficient conditions that allows to extend a functor, see \cite{Mulry_1994} for example.

\begin{definition} A \emph{distributive law} from a functor F to two monads $(\monad[1], \monadUnit[1], \monadSum[1])$ and $(\monad[2], \monadUnit[2], \monadSum[2])$ is a natural transformation $\kleisliLift : F \monad[1] \naturalTrans \monad[2] F$ such that the two diagrams commute.
\begin{center}
(Lift-Unit) 
\begin{tikzcd}
FX \arrow[rd, "{\monadUnit[2]}"'] \arrow[r, "{F \monadUnit[1]}"] & F \monad[1] X \arrow[d, "\kleisliLift"] \\
                                                           & \monad[2] F X                          
\end{tikzcd} \ \ 
(Lift-Sum)
\begin{tikzcd}
F \monad[1]^2 X \arrow[d, "{\monadSum[1]}"'] \arrow[r, "{\kleisliLift_{\monad[1] X}}"] & \monad[2] F \monad[1] X \arrow[r, "{\monad[2] \kleisliLift_X}"] & \monad[2]^2 F X \arrow[d, "{\monadSum[2]}"] \\
F \monad[1] X \arrow[rr, "\kleisliLift"']                                          &                                                               & \monad[2] F X                            
\end{tikzcd}
\end{center}
We write $(F, \kleisliLift) : (\category[1], \monad[1]) \arrow (\category[2], \monad[2])$ if $\kleisliLift$ is a distributive law of F on  $\monad[1]$ and $\monad[2]$
\end{definition}

This notion is also called distributive law, because a distributive law from a monad $\monad$ to a comonad $!$ is in fact a distributive law from the functor $F=!$ to the monads $\monad[1] = \monad[2] = \monad[0]$ that fulfills some additional commutations. This intersection of terminology was somewhat troublesome because it made my bibliographic work much harder.

\begin{theorem} \label{thm:extension-and-kleisli-lifting} There is a bijection between the set of functors $\kleisliExtension{F} : \kleisli[1] \arrow \kleisli[2] $ that extends F and the set of distributive laws $\kleisliLift : F \monad[1] \naturalTrans \monad[2] F$
\end{theorem}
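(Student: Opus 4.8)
The plan is to exhibit the two maps of the claimed bijection explicitly and to check they are mutually inverse; the whole argument is just an unwinding of the definitions of Kleisli composition, of the free functors $\kleisliCast[1], \kleisliCast[2]$, and of the two triangle laws of a monad. \emph{From a distributive law to an extension.} Given $\kleisliLift : F\monad[1] \naturalTrans \monad[2] F$, I would set $\kleisliExtension{F} X := FX$ on objects and, for a Kleisli morphism $f \in \kleisli[1](X,Y) = \category[1](X, \monad[1] Y)$, define $\kleisliExtension{F}(f) := \kleisliLift_Y \circ Ff$, an element of $\category[2](FX, \monad[2] FY) = \kleisli[2](FX, FY)$. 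Preservation of identities is exactly (Lift-Unit) applied to the identity $\monadUnit[1]_X$ of $X$ in $\kleisli[1]$, which gives $\kleisliLift_X \circ F\monadUnit[1]_X = \monadUnit[2]_{FX}$. For preservation of composition one expands the Kleisli composite $\monadSum[1]_Z \circ \monad[1] g \circ f$, pushes $F$ through it, uses naturality of $\kleisliLift$ at $g : Y \arrow \monad[1] Z$ to rewrite $\kleisliLift_{\monad[1] Z} \circ F\monad[1] g$ as $\monad[2] Fg \circ \kleisliLift_Y$, and then uses (Lift-Sum) to rewrite $\kleisliLift_Z \circ F\monadSum[1]_Z$ as $\monadSum[2]_{FZ} \circ \monad[2]\kleisliLift_Z \circ \kleisliLift_{\monad[1]Z}$; what comes out is precisely the Kleisli composite of $\kleisliExtension{F}(g)$ and $\kleisliExtension{F}(f)$. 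A one-line computation using (Lift-Unit) gives $\kleisliExtension{F} \circ \kleisliCast[1] = \kleisliCast[2] \circ F$, so $\kleisliExtension{F}$ really extends $F$.

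\emph{From an extension to a distributive law.} Conversely, let $\kleisliExtension{F}$ extend $F$. The key remark is that every Kleisli morphism $f : X \arrow Y$ of $\kleisli[1]$ factors as the Kleisli composite of $\kleisliCast[1](f) : X \arrow \monad[1] Y$ with the morphism $\kappa^{(1)}_Y : \monad[1] Y \arrow Y$, where $\kappa^{(1)}_X$ denotes $id_{\monad[1] X}$ read through the identification $\kleisli[1](\monad[1] X, X) = \category[1](\monad[1] X, \monad[1] X)$; this factorisation is again just the triangle identity $\monadSum[1]_Y \circ \monadUnit[1]_{\monad[1] Y} = id$. Consequently an extension of $F$ is forced to send $f$ to the Kleisli composite of $\kleisliCast[2](Ff)$ with $\kleisliExtension{F}(\kappa^{(1)}_Y)$, so it is determined by the family $\kleisliExtension{F}(\kappa^{(1)}_X)$, and I would take $\kleisliLift_X \in \category[2](F\monad[1]X, \monad[2]FX)$ to be precisely $\kleisliExtension{F}(\kappa^{(1)}_X)$ read through $\kleisli[2](F\monad[1]X, FX) = \category[2](F\monad[1]X, \monad[2]FX)$. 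Naturality of $\kleisliLift$ in $X$, (Lift-Unit) and (Lift-Sum) are then all obtained by applying the functor $\kleisliExtension{F}$ to the obvious identities in $\kleisli[1]$ that relate $\kappa^{(1)}$ to $\kleisliCast[1]$, $\monadUnit[1]$ and $\monadSum[1]$, using $\kleisliExtension{F}\circ\kleisliCast[1] = \kleisliCast[2]\circ F$ to translate the $\kleisliCast$ parts.

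\emph{Mutual inverse, and the delicate point.} The two round trips then close up: starting from $\kleisliLift$, building $\kleisliExtension{F}$ and extracting a distributive law returns $\kleisliLift$, since $\kleisliExtension{F}(\kappa^{(1)}_X) = \kleisliLift_X \circ F(id_{\monad[1] X}) = \kleisliLift_X$; and starting from $\kleisliExtension{F}$, extracting $\kleisliLift$ and rebuilding an extension returns $\kleisliExtension{F}$ by the forced-factorisation remark above. I expect the only genuinely fiddly step to be the composition-preservation diagram chase in the first half, where both naturality of $\kleisliLift$ and (Lift-Sum) get consumed and one has to keep the various instances of $\monadUnit[2]$ and $\monadSum[2]$ straight; a secondary source of slips is, in the second half, tracking at each step whether an arrow is meant as a morphism of $\category[i]$ or of $\kleisli[i]$. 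Neither is conceptually hard --- which is really the point, since this lemma is exactly what will let the model-of-linear-logic structure be transported from $\category$ to $\kleisliS$ in \cref{sec:distributive-law}.
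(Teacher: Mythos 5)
Your proposal is correct, and it is essentially the standard argument: the paper itself states this theorem without proof, deferring to the literature (Mulry), and the proof found there is exactly your construction — $\kleisliExtension{F}(f) := \kleisliLift_Y \circ Ff$ in one direction, and $\kleisliLift_X := \kleisliExtension{F}(id_{\monad[1]X})$ read as a Kleisli morphism $\monad[1]X \arrow X$ in the other, with (Lift-Unit) and (Lift-Sum) corresponding to identity and composition preservation. All the individual verifications you sketch (the forced factorisation through $\kappa^{(1)}$, the diagram chase for composition, the two round trips) go through as stated.
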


However, I did not see anything in the litterature (at first) that characterizes when a natural transformation extends to the Kleisli category. Surprisingly, I discovered during my internship this very straightforward characterization.

\begin{definition} Take $(F, \kleisliLift_F), (G, \kleisliLift_G): (\category[1], \monad[1]) \arrow (\category[2], \monad[2])$ two distributive laws. A morphism of distributive laws $\alpha : (F, \kleisliLift_F) \arrow (G, \kleisliLift_G)$ is a natural transformations $\alpha : F \naturalTrans G$ that verifies the following diagram
\begin{center}
\begin{tikzcd}
F \monad[1] \arrow[d, "\kleisliLift^F"'] \arrow[r, "{\alpha \monad[1]}"] & G \monad[1] \arrow[d, "\kleisliLift^G"] \\
\monad[2] F \arrow[r, "{\monad[2] \alpha}"']                             & \monad[2] G                            
\end{tikzcd}
\end{center}
\end{definition}

\begin{result} \label{thm:extension-and-kleisli-lifting-morphism} Take $(F, \kleisliLift_F), (G, \kleisliLift_G): (\category[1], \monad[1]) \arrow (\category[2], \monad[2])$ two distributive laws with associated extension $\kleisliExtension{F}, \kleisliExtension{G}$. Take $\alpha : F \naturalTrans G$ a natural transformation. The the following are equivalent: \begin{itemize}
\item[(1)] $\alpha : (F, \kleisliLift_F) \arrow (G, \kleisliLift_G)$ is a morphism of distributive laws
\item[(2)] $\alpha$ extends to the kleisli category, in other words $\kleisliCast[2](\alpha) : \kleisliExtension{F} \naturalTrans \kleisliExtension{G}$ is a natural transformation
\end{itemize}
\end{result}

The proof is quite simple (just unfold the definitions and use naturality), yet this result provides a powerful toolbox to extend natural transformations. So it \emph{had} to be an already existing notion. It turns out to be the case: the notion of morphism of distributive law is introduced in a more abstract setting in~\cite{Street_1972}. Besides, an article by Power and Watabane \cite{Power_2002} brought this notion to the semantic community. A recent discussion with Paul André Meliès suggested that this fact is somewhat folklore in some communities, but also that this notion seems to be somewhat underutilized in concrete settings such as mine. So the conclusions I draw below seem to be new and are quite enlightening.

\begin{itemize}
\item Applying the method to the ressource comonad gives back the notion of distributive law between a monad and a comonad. For example, in the setting of coherent differentiation, ($\devMorphism$-lin) are the rules required for $\devMorphism$ to be a distributive law between the monad $\S$ and the functor $!$, and ($\devMorphism$-chain) are the rules required to extend $\der$ and $\dig$. 

\item  The functor $\tensor$ extends to $\kleisli$ if and only if there exists a distributive law $\kleisliLift^{\tensor}_{X,Y} : \monad X \tensor \monad X \arrow \monad(X \tensor Y)$. Interestingly, the commutations required to extend the natural transformations $\tensorAssoc, \tensorUnitL, \tensorUnitR, \tensorSym$ are exactly the commutations that states that $(\monad, \monadUnit, \kleisliLift^{\tensor})$ is a lax symetric monoidal functor from $(\category, \tensor, 1)$ to itself. Adding the commutations required for $\kleisliLift^{\tensor}$ to be  a distributive laws exactly gives the conditions required to makes $\monad$ what is called a \emph{symetric monoidal monad}. This notion has been developped independently precisely to show when the Kleisli category of a monad inherits the structure of a symetric monoidal category, see \cite{Kock_1970}. Besides, $\S$ precisely fulfills those conditions for $\kleisliLift^{\tensor} = \doubleDistrib$ (defined in \cref{sec:monad-S}). Thus $\kleisliS$ inherits from $\category$ the structure of a symetric monoidal category. 


\item Finally, it turns out that in the setting of coherent differentiation, the conditions to extend $\seelyOne$ and $\seelyTwo$ are exactly the two ($\devMorphism$-Leibniz) axioms.

\end{itemize}

To conclude, I built some generic tools that exhibit that the axioms of coherent differentiation (excluding the Schwarz rule) turn out to be the necessary and sufficient conditions to ensures that $\kleisliS$ inherits from $\category$ the structure of a model of linear logic. So the abstract principles of first order differential calculus turns out to be \emph{closely} tied to the very generic notion of structure extension applied to the specific monad $(\S, \Sinj_0, \Smonadsum)$.

There is dually a notion of distributive law from a functor to two \emph{comonads}, as well as similar notion of morphism between such distributive laws. Applying this notion to the comonad $\_ \tensor I$ should solve the initial goal of showing that the Kleisli category of this comonad is a model of linear logic, but I was a bit short on time so I could not write it down yet. Besides, I think that the adjunction $\canonicalCoS \dashv \canonicalS$ should canonically relate distributive laws to monads and their morphisms with distributive laws to comonads and their morphisms. Most of this work is already implicitely done in Ehrhard's paper.

\section{Conclusion}

To sum up most of my internship, I showed in \cref{sec:dill-is-coherent-diff} that coherent differentiation is a generalization of differential categories and that its axioms except ($\devMorphism$-lin) are in a one to one correspondence to the axioms on differential categories.
I built in \cref{sec:distributive-law} some generic tools that, when applied to coherent differentiation, exhibit that those axioms (excluding the Schwarz rule) turn out to be the necessary and sufficient conditions to ensures that $\kleisliS$ inherits from $\category$ the structure of a model of linear logic. 

\paragraph{Plan for future work}

Recall that all of the concrete instances of summability structure Ehrhard found in the various models actually consists in the functor $\canonicalS = 1 \with 1 \linarrow \_$. A natural question is whether or not a summability structure is always of this shape. A step in the right direction would be to find instances of Monads $\monad$ that are not of shape $1 \with 1 \linarrow \_$ such that $\monad$ follows the properties discussed in \cref{sec:distributive-law}, even if they does not correspond to a notion of differentiation. But those considerations require to learn a bunch of different models, this is why the question was not studied during the internship.

Finally, we expect that coherent differentiation might relate to the growing field of automated differentiation, used in machine learning. The difference is that automated differentiation only differentiates with regard to a type of real numbers, while coherent differentiation differentiates with regard to everything (which is admittedly a weird thing). Further work has to be done in order to relate those two fields, as it could provide an exciting application 

\printbibliography

\newpage
\section*{Appendix: basic category theory definition} \label{sec:annex}

I assumed in this report that the reader knows the very basic definitions of a category (the . This appendix is here just in case to give a refresher on those notions.

\begin{definition}[Category] A category $\category$ consists in \begin{itemize}
	\item In a set $Obj(\category)$ called the set of objects of $\category$
	\item For any pair of objects $A,B \in Obj(\category)$, a set $\category(A, B)$ called the set of morphisms from $A$ to $B$.
\end{itemize}
Such that : \begin{itemize}
	\item For any object $A$, there exists a morphism $id^{\category}_A \in \category(A, A)$
	\item For any morphisms $f \in \category(A, B)$ and $g \in \category(B, C)$, there exists a morphism $g \circ_{\category} f \in \category(A, C)$
	\item For any $f \in \category(A, B)$, $f \circ_{\category} id^{\category}_A = id^{\category}_B \circ_{\category} f = f$
	\item For any  $f \in \category(A, B)$, $g \in \category(B, C)$ and $h \in \category(C, D)$, $h \circ_{\category} (g \circ_{\category} f) = (h \circ_{\category} g) \circ_{\category} f$
\end{itemize}
\end{definition}

\begin{remark} I directly introduce $Obj(\category)$ and $\category(X, Y)$ as sets rather than classes because I do not want to deal with foundational issues here (the set of all set is not defined, the category of all category is not defined, etc).
\end{remark}

\begin{notation} When there is no doubt about the category $\category$ considered, I will write $\circ$ for $\circ_{\category}$ and $id$ for $id^{\category}$. Besides, I will often write $f : A \arrow B$ for $f \in \category(A, B)$ (taking inspiration from functional notations).
\end{notation}

\begin{definition}[Isomorphism] A morphism $f : A \arrow B$ is an \emph{isomorphism} if there exists a morphism $g : B \arrow A$ such that $f \circ g = id_B$ and $g \circ f = id_A$.
\end{definition}

\begin{definition}[Functor] Given two categories $\category[1]$ and $\category[2]$, a functor $F : \category[1] \arrow \category[2]$ consists in: \begin{itemize}
	\item A function $F^{Obj} : Obj(\category[1]) \arrow Obj(\category[2])$
	\item For any objects $A, B \in Obj(\category[1])$, a function $F_{A,B} : \category[1](A, B) \arrow \category[2](F^{Obj}(A), F^{Obj}(B))$ such that $F_{A,A}(id^{\category[1]}_A) = id^{\category[2]}_{F^{Obj}(A)}$ and $F_{A,C} (f \circ_{\category[1]} g) = F_{A, B}(f) \circ_{\category[2]} F_{B,C}(g)$.
\end{itemize}
We will often write $F$ for both $F^{Obj}$ and $F_{A,B}$ (depending on what it is applied, we can infer if we actually use $F^{Obj}$ or $F_{A,B}$). 
\end{definition}

\begin{definition}[Product category] Given two categories $\category[1]$ and $\category[2]$, we can define a category $\category[1] \with \category[2]$ where: \begin{itemize}
\item $Obj(\category[1] \with \category[2]) :=\{(A_1, A_2) \ | \ A_1 \in Obj(\category[1]) \text{ and  } B_1 \in Obj(\category[2]) \}$ 
\item $(\category[1] \with \category[2])((A_1, A_2), (B_1, B_2)) := \{(f, g) \ | \ f \in \category[1](A_1, B_1) \text{ and } g \in \category[2](A_2, B_2)\}$
\item The identity is defined as $id_{(A_1, A_2)} := (id^{\category[1]}_{A_1}, id^{\category[2]}_{A_2})$ and the composition is defined as $(g_1, g_2) \circ (f_1, f_2) := (g_1 \circ_{\category[1]} f_1, g_2 \circ_{\category[2]} f_2)$. We can check that this is indeed a category.
\end{itemize}
\end{definition}


\begin{definition}[Endofunctor, bifunctor] We call an \emph{endofunctor} any functor of shape $F : \category \arrow \category$. We call a \emph{bifunctor} any functor of shape $F : \category[1] \with \category[2] \arrow \category$. 
\end{definition}

A very powerful and intuitive way of showing equalities between morphisms in categories is to use what are called \emph{commutative diagrams}. For example, the equality $h = g \circ h$ can be represented as the diagram.
\begin{center}
\begin{tikzcd}
X \arrow[r, "f"] \arrow[rd, "h"'] & Y \arrow[d, "g"] \\
                                  & Z               
\end{tikzcd}
\end{center}
More formally, a diagram is an oriented graph where vertices are indexed by objects and every edges from $X$ to $Y$ are indexed by a morphism of $\category(X,Y)$. A diagram states that for any objects $X,Y$ and any path from $X$ to $Y$, the morphisms obtained by the successive compositions are equal.

Diagrams can be combined, doing what is called a \emph{diagram chase}. The diagram below is an example of a diagram chase.
\begin{center}
\begin{tikzcd}
X_1 \arrow[r, "f_1"] \arrow[d, "h_X"'] & Y_1 \arrow[d, "h_Y"] \arrow[r, "g_1"] & Z_1 \arrow[d, "h_Z"] \\
X_2 \arrow[r, "f_2"']                  & Y_2 \arrow[r, "g_2"']                 & Z_2                 
\end{tikzcd}
\end{center}
This chase states that $g_2 \circ f_2 \circ  h_X = h_Z \circ g_1 \circ f_1$ using the fact that $f_2 \circ h_X = h_Y \circ f_1$ and $g_2 \circ h_Y = h_2 \circ g_1$. 

\begin{definition}[Natural transformation] A natural transformation $\alpha$ between two functors $F, G : \category[1] \arrow \category[2]$, written $\alpha : F \naturalTrans G$, is a family of morphisms $(\alpha_X)_{X \in Obj(\category[1])}$ where $\alpha_X \in \category[2](F X, G X)$ such that for any objects $X,Y$ of $\category[1]$ and $f \in \category[1](X, Y)$, the following diagram commutes.
\begin{center}
\begin{tikzcd}
FX \arrow[r, "F(f)"] \arrow[d, "\alpha_X"'] & FY \arrow[d, "\alpha_Y"] \\
GX \arrow[r, "G(f)"']                       & GY                      
\end{tikzcd}
\end{center}
When the morphisms are all isomorphisms, $\alpha$ is called a natural isomorphism.
\end{definition}

\end{document}